\documentclass[12pt,oneside]{amsart}
\usepackage[latin9]{inputenc}
\usepackage{geometry}
\geometry{verbose,tmargin=2cm,bmargin=2cm,lmargin=2cm,rmargin=2cm}
\usepackage{mathrsfs}
\usepackage{mathtools}
\usepackage{amstext}
\usepackage{amsthm}
\usepackage{amssymb}
\usepackage{mathdots}
\usepackage{graphicx}
\usepackage[unicode=true,pdfusetitle,
 bookmarks=true,bookmarksnumbered=true,bookmarksopen=true,bookmarksopenlevel=1,
 breaklinks=false,pdfborder={0 0 1},backref=section,colorlinks=false]
 {hyperref}
\hypersetup{
 hypertex,bookmarksnumbered=true}

\makeatletter
\numberwithin{equation}{section}
\numberwithin{figure}{section}
\theoremstyle{plain}
\newtheorem{thm}{\protect\theoremname}
\theoremstyle{remark}
\newtheorem{rem}[thm]{\protect\remarkname}
\theoremstyle{plain}
\newtheorem{prop}[thm]{\protect\propositionname}


\makeatother

\providecommand{\propositionname}{Proposition}
\providecommand{\remarkname}{Remark}
\providecommand{\theoremname}{Theorem}

\begin{document}
\title{Perturbations of circuit evolution matrices with Jordan blocks}
\author{Alexander Figotin}
\address{University of California at Irvine, CA 92967}
\begin{abstract}
In our prior studies we synthesized special circuits possessing evolution
matrices that involve nontrivial Jordan blocks and the corresponding
degenerate eigenfrequencies. The degeneracy of this type is sometimes
referred to as exceptional point of degeneracy (EPD). The simplest
of these circuits are composed just of two $LC$-loops coupled by
a gyrator and they are of our primary interest here. These simple
circuits when near an EPD state can be used for enhanced sensitivity
applications. With that in mind we develop here a comprehensive perturbation
theory for these simple circuits near an EPD as well way to assure
their stable operation.

As to broader problem of numerical treatment of Jordan blocks and
their perturbation we propose a few approaches allowing to detect
the proximity to Jordan blocks.
\end{abstract}

\keywords{Electric circuit, exceptional point of degeneracy (EPD), Jordan block,
perturbations, instability, sensitivity.}
\maketitle

\section{Introduction\label{sec:intro}}

There is growing interest to electromagnetic system exhibiting Jordan
eigenvector degeneracy, which is a degeneracy of the system evolution
matrix when not only some eigenvalues, which often are frequencies,
coincide but the corresponding eigenvectors coincide also. The degeneracy
of this type is sometimes referred to as exceptional point of degeneracy
(EPD), \cite[II.1]{Kato}. As to the Jordan canonical forms theory
and its relations to differential equations we refer the reader to
\cite[III.4]{Hale}, \cite[3.1,3.2]{HorJohn}, \cite[25.4]{ArnODE}.

A particularly important class of applications of EPDs in sensing,
\cite{CheN}. \cite{HHWGECK}, \cite{KNAC}, \cite{PeLiXu}, \cite{Wie},
\cite{Wie1}. The idea, in nutshell, of using EPDs for sensing is
as follows. Suppose that sensing system is set to be at an EPD point
before the measurement. Then when a sensor is engaged its signal of
presumably small amplitude $\epsilon$ perturbs the system altering
its eigenfrequencies proportionally to $\sqrt{\epsilon}$. Since for
small $\epsilon$ its square root $\sqrt{\epsilon}$ is much larger
than $\epsilon$ we get enhanced sensitivity.

We study here those aspects of the perturbation theory of Jordan blocks
useful in detecting their presence and quantifying their proximity.
Motivated by enhanced sensing applications we also develop detailed
perturbation theory for the case of simple circuits as in Fig. \ref{fig:pri-cir2}
we introduced in \cite{FigSynbJ}. The typical circuit of interest
is composed of two $LC$-loops coupled with a gyrator, where $L_{j}$
and $C_{j}$ for $j=1,2$ and $G$ are respectively inductances and
capacitances, that cane be positive or negative, and the gyrator resistance.
We refer the list of these parameters as the circuit parameters. We
concisely review the relevant these circuits subjects from \cite{FigSynbJ}
in Section \ref{sec:circuit2} and develop the perturbation for the
circuits in Section \ref{sec:percir2}.

An infinitesimally small transformation can be make a matrix with
nontrivial Jordan structure completely diagonalizable. This fact complicates
numerical assessment of the presence of non-trivial Jordan blocks
as well as their utilization in application. In Section \ref{sec:Jordper}
we consider several approaches allowing to detect Jordan blocks numerically
and quantify proximity to them.

\begin{figure}[h]
\centering{}\includegraphics[scale=0.8]{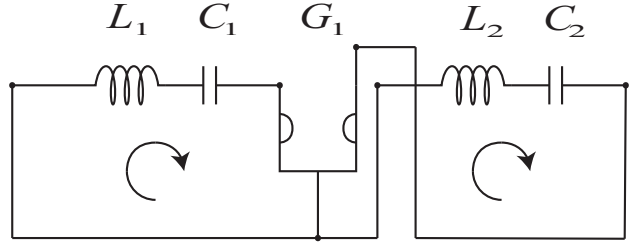}\caption{\label{fig:pri-cir2} For particular choices of values for quantities
$L_{1}$, $C_{1}$, $L_{2}$, $C_{2}$ and $G_{1}$ the evolution
matrix of this circuit develops EPDs, and its Jordan canonical form
consists of exactly two Jordan blocks of the size two. }
\end{figure}

\section{circuit and its specifications\label{sec:circuit2}}

We review concisely here properties of the circuit in Fig. \ref{fig:pri-cir2}
that we studied in \cite[4]{FigSynbJ}. The circuit vector evolution
equation and the corresponding eigenvalue problem are as follows
\begin{equation}
\partial_{t}\mathsf{q}=\mathscr{C}\mathsf{q},\quad\mathscr{C}=\left[\begin{array}{rrrr}
0 & 0 & 1 & 0\\
0 & 0 & 0 & 1\\
-\frac{1}{L_{1}C_{1}} & 0 & 0 & \frac{G_{1}}{L_{1}}\\
0 & -\frac{1}{L_{2}C_{2}} & -\frac{G_{1}}{L_{2}} & 0
\end{array}\right],\quad\mathsf{q}=\left[\begin{array}{c}
q\\
\partial_{t}q
\end{array}\right],\label{eq:Adqs1dp}
\end{equation}
\begin{equation}
\left(s\mathbb{I}-\mathscr{C}\right)\mathsf{q}=0,\quad\mathsf{q}=\left[\begin{array}{c}
q\\
sq
\end{array}\right].\label{eq:Adqs1ep}
\end{equation}
The characteristic polynomial associated with matrix $\mathscr{C}$
defined by equations (\ref{eq:Adqs1dp}) and the corresponding characteristic
equations are respectively
\begin{equation}
\chi\left(s\right)=\det\left\{ s\mathbb{I}-\mathscr{C}\right\} =s^{4}+\left(\xi_{1}+\xi_{2}+\frac{g}{L_{1}L_{2}}\right)s^{2}+\xi_{1}\xi_{2},\label{eq:Adqs2ap}
\end{equation}
\begin{equation}
\chi\left(s\right)=s^{4}+\left(\xi_{1}+\xi_{2}+\frac{g}{L_{1}L_{2}}\right)s^{2}+\xi_{1}\xi_{2}=0,\label{eq:Adqs2asp}
\end{equation}
where
\begin{equation}
\xi_{1}=\frac{1}{L_{1}C_{1}},\quad\xi_{2}=\frac{1}{L_{2}C_{2}},\quad g=G_{1}^{2}.\label{eq:Adqs2bp}
\end{equation}
We refer to positive $g=G_{1}^{2}$ in equations (\ref{eq:Adqs2bp})
as the \emph{gyration parameter}. Notice that using parameters in
equations (\ref{eq:Adqs2bp}) we can recast the companion matrix $\mathscr{C}$
defined by equations (\ref{eq:Adqs1dp}) and its characteristic function
$\chi\left(s\right)$ as in equation (\ref{eq:Adqs2asp}) as follows
\begin{equation}
\mathscr{C}=\left[\begin{array}{rrrr}
0 & 0 & 1 & 0\\
0 & 0 & 0 & 1\\
-\xi_{1} & 0 & 0 & \frac{G_{1}}{L_{1}}\\
0 & -\xi_{2} & -\frac{G_{1}}{L_{2}} & 0
\end{array}\right],\label{eq:Adqs2cp}
\end{equation}
\begin{equation}
\chi\left(s\right)=\chi_{h}=h^{2}+\left(\xi_{1}+\xi_{2}+\frac{g}{L_{1}L_{2}}\right)h+\xi_{1}\xi_{2},\quad h=s^{2}.\label{eq:Adqs2dp}
\end{equation}
The solutions to the quadratic equation $\chi_{h}=0$ are
\begin{equation}
h_{\pm}=\frac{-\left(\xi_{1}+\xi_{2}+\frac{g}{L_{1}L_{2}}\right)\pm\sqrt{\Delta_{h}}}{2},\label{eq:Adqs2ehp}
\end{equation}
where
\begin{equation}
\Delta_{h}=\frac{g^{2}}{L_{1}^{2}L_{2}^{2}}+\frac{2\left(\xi_{1}+\xi_{2}\right)g}{L_{1}L_{2}}+\left(\xi_{1}-\xi_{2}\right)^{2}\label{eq:Adqs2ep}
\end{equation}
is the discriminant of the quadratic polynomial $\chi_{h}$ (\ref{eq:Adqs2dp}).
The corresponding four solutions $s$ to the characteristic equation
(\ref{eq:Adqs2asp}), that is the eigenvalues, are
\begin{equation}
s=\pm\sqrt{h_{+}},\pm\sqrt{h_{-}},\label{eq:Adqs2ehsp}
\end{equation}
where $h_{\pm}$ satisfy equations (\ref{eq:Adqs2ehp}).

Notice that the eigenvalue degeneracy condition turns into equation
$\Delta_{h}=0$ which is equivalent to
\begin{equation}
L_{1}^{2}L_{2}^{2}\Delta_{h}=g^{2}+2\left(\xi_{1}+\xi_{2}\right)gL_{1}L_{2}+\left(\xi_{1}-\xi_{2}\right)^{2}L_{1}^{2}L_{2}^{2}=0.\label{eq:Adqs3ap}
\end{equation}
Equation (\ref{eq:Adqs3ap}) evidently is a constraint on the circuit
parameters which is a quadratic equation for $g$. Being given the
circuit coefficients $\xi_{1}$, $\xi_{2}$, $L_{1}$ and $L_{2}$
this quadratic in $g$ equation has exactly two solutions
\begin{equation}
g_{\delta}=\left(-\xi_{1}-\xi_{2}+2\delta\sqrt{\xi_{1}\xi_{2}}\right)L_{1}L_{2},\quad\delta=\pm1.\label{eq:Adqs3bp}
\end{equation}
We refer to $g_{\delta}$ in equations (\ref{eq:Adqs3bp}) as\emph{
special values of the gyration parameter $g$}. For the two special
values $g$ we get from equations (\ref{eq:Adqs2ehp}) the corresponding
two degenerate roots
\begin{equation}
h=-\frac{\xi_{1}+\xi_{2}+\frac{\dot{g}}{L_{1}L_{2}}}{2}=\pm\sqrt{\xi_{1}\xi_{2}}.\label{eq:Adqs3bhp}
\end{equation}
Since $G_{1}$ is real then $g=G_{1}^{2}$ is real as well. The expression
(\ref{eq:Adqs3bp}) for $g$ is real-valued if and only if
\begin{equation}
\xi_{1}\xi_{2}>0,\text{ or equivalently }\mathrm{\frac{\xi_{1}}{\left|\xi_{1}\right|}=\frac{\xi_{2}}{\left|\xi_{2}\right|}}=\sigma,\label{eq:Adqs3cp}
\end{equation}
where we introduced a binary variable $\sigma$ taking values $\pm1$.
We refer to $\sigma$ as the \emph{circuit sign index}. Relations
(\ref{eq:Adqs3cp}) imply in particular that the equality of signs
$\mathrm{sign}\,\left\{ \xi_{1}\right\} =\mathrm{sign}\,\left\{ \xi_{2}\right\} $
is a necessary condition for the eigenvalue degeneracy condition $\Delta_{h}=0$
provided that $g$ has to be real-valued.

It follows then from relations (\ref{eq:Adqs3bp}) and (\ref{eq:Adqs3cp})
that the special values of the gyration parameter $g_{\delta}$ can
be recast as
\begin{gather}
g_{\delta}=-\sigma\left(\sqrt{\left|\xi_{1}\right|}+\delta\sqrt{\left|\xi_{2}\right|}\right)^{2}L_{1}L_{2},\quad\delta=\pm1,\label{eq:Adqs3dp}
\end{gather}
where $\sqrt{\xi}>0$ for $\xi>0$. Recall that $g=G_{1}^{2}>0$ and
to provide for that we must have in right-hand side of equations (\ref{eq:Adqs3dp})
\begin{equation}
-\sigma L_{1}L_{2}>0\text{, or equivalently }\frac{L_{1}L_{2}}{\left|L_{1}L_{2}\right|}=-\sigma.\label{eq:Adqs3ep}
\end{equation}
Relations (\ref{eq:Adqs3cp}) and (\ref{eq:Adqs3ep}) on the signs
of the involved parameters can be combined into the \emph{circuit
sign constraints}
\begin{equation}
\mathrm{sign}\,\left\{ \xi_{1}\right\} =\mathrm{sign}\,\left\{ \xi_{2}\right\} =-\mathrm{sign}\,\left\{ L_{1}L_{2}\right\} =\mathrm{sign}\,\left\{ \sigma\right\} .\label{eq:Adqs3esp}
\end{equation}
\emph{Notice that the sign constraints (\ref{eq:Adqs3esp}) involving
the circuit index $\sigma$ defined by (\ref{eq:Adqs3cp}) are necessary
for the eigenvalue degeneracy condition $\Delta_{h}=0$. }Combing
equations (\ref{eq:Adqs3dp}) and (\ref{eq:Adqs3ep}) we obtain
\begin{gather}
g_{\delta}=\left(\sqrt{\left|\xi_{1}\right|}+\delta\sqrt{\left|\xi_{2}\right|}\right)^{2}\left|L_{1}L_{2}\right|,\quad\delta=\pm1,\text{ assuming the circuit sign constraints}.\label{eq:Adqs3gp}
\end{gather}
Since $g=G_{1}^{2}>0$ the special values of the gyrator resistance
$\dot{G}_{1}$ corresponding to the special values $g_{\delta}$ as
in equation (\ref{eq:Adqs3gp}) are
\begin{gather}
\dot{G}_{1}=\sigma_{1}\sqrt{g_{\delta}}=\sigma_{1}\left(\sqrt{\left|\xi_{1}\right|}+\delta\sqrt{\left|\xi_{2}\right|}\right)\sqrt{\left|L_{1}L_{2}\right|},\text{ assuming the circuit sign constraints, }\label{eq:Adqs3hp}
\end{gather}
where the binary variable $\sigma_{1}$ takes values $\pm1$.

Using representation (\ref{eq:Adqs3gp}) for $g_{\delta}$ under the
circuit sign constraints (\ref{eq:Adqs3esp}) we can recast the expression
for the degenerate root $\dot{h}$ in equations (\ref{eq:Adqs3bhp})
as follows
\begin{equation}
h_{\delta}=\sigma\delta\sqrt{\left|\xi_{1}\right|}\sqrt{\left|\xi_{2}\right|},\text{for }g=g_{\delta}=\left(\sqrt{\left|\xi_{1}\right|}+\delta\sqrt{\left|\xi_{2}\right|}\right)^{2}\left|L_{1}L_{2}\right|,\quad\delta=\pm1,\label{eq:Adqs3fp}
\end{equation}
where $\sigma$ is the circuit sign index defined by equations (\ref{eq:Adqs3cp})
and $\sqrt{\xi}>0$ for $\xi>0$.

One the principal results regarding the circuit in Fig. \ref{fig:pri-cir2}
is as follows \cite{FigSynbJ} .
\begin{thm}[Jordan form under degeneracy]
\label{thm:jordeg} Let the circuit be as depicted in Fig. \ref{fig:pri-cir2}
and let all its parameters $L_{1}$, $C_{1}$, $L_{2}$, $C_{2}$
and $G_{1}$ be real and non-zero. Then the companion matrix $\mathscr{C}$
satisfying equations (\ref{eq:Adqs1dp}) and (\ref{eq:Adqs2cp}) has
the Jordan form
\begin{equation}
\mathscr{J}=\left[\begin{array}{rrrr}
s_{0} & 1 & 0 & 0\\
0 & s_{0} & 0 & 0\\
0 & 0 & -s_{0} & 1\\
0 & 0 & 0 & -s_{0}
\end{array}\right],\label{eq:Adqs4cp}
\end{equation}
if and only if the circuit parameters satisfy the degeneracy conditions
as in equation (\ref{eq:Adqs3ap}). Then for $g=g_{\delta}$ we have
\begin{equation}
h_{\delta}=\sigma\delta\sqrt{\left|\xi_{1}\right|}\sqrt{\left|\xi_{2}\right|},\text{ for }g=g_{\delta}=\left(\sqrt{\left|\xi_{1}\right|}+\delta\sqrt{\left|\xi_{2}\right|}\right)^{2}\left|L_{1}L_{2}\right|,\label{eq:Adqs4cap}
\end{equation}
\begin{equation}
\pm s_{0}=\pm\sqrt{\sigma\delta\sqrt{\left|\xi_{1}\right|}\sqrt{\left|\xi_{2}\right|}},\text{ for }g=g_{\delta}=\left(\sqrt{\left|\xi_{1}\right|}+\delta\sqrt{\left|\xi_{2}\right|}\right)^{2}\left|L_{1}L_{2}\right|,\label{eq:Adqs4cbp}
\end{equation}
where $\sqrt{\xi}>0$ for $\xi>0$, $\delta=\pm1$ and $\sigma$ is
the circuit sign index defined by equations (\ref{eq:Adqs3cp}). According
to formula (\ref{eq:Adqs4cbp}) degenerate eigenvalues $\pm s_{0}$
depend on the product $\sigma\delta$ and $\left|\xi_{1}\right|,\left|\xi_{2}\right|$
and consequently they are either real or pure imaginary depending
on whether $\delta=\sigma$ or $\delta=-\sigma$.

Notice that in the special case when $\left|\xi_{1}\right|=\left|\xi_{2}\right|$
the parameter $g_{\delta}$ takes only one non-zero value, namely
\begin{equation}
g=g_{1}=4\left|\xi_{1}\right|\left|L_{1}L_{2}\right|,\quad L_{1}L_{2}=-\sigma\left|L_{1}L_{2}\right|,\label{eq:Adqs4dp}
\end{equation}
whereas $g_{-1}=0$ which is inconsistent with our assumption $G_{1}\neq0$.
Evidently for $g=0$ the circuit breaks into two independent $LC$-circuits
and in this case the relevant Jordan form is a diagonal $4\times4$
matrix with eigenvalues $\pm\sqrt{\xi_{1}}$ and $\pm\sqrt{\xi_{2}}$
.
\end{thm}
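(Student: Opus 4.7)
The plan is to exploit the fact that the characteristic polynomial (\ref{eq:Adqs2dp}) is a quadratic in $h=s^{2}$, so that the four eigenvalues of $\mathscr{C}$ always come in $\pm$ pairs $\pm\sqrt{h_{+}},\pm\sqrt{h_{-}}$, where $h_{\pm}$ are the roots of $\chi_{h}$. Hence $\mathscr{C}$ has exactly two distinct eigenvalues $\pm s_{0}$, each of algebraic multiplicity $2$, precisely when $h_{+}=h_{-}$, i.e.\ when the discriminant $\Delta_{h}$ vanishes; this is the degeneracy condition (\ref{eq:Adqs3ap}). The explicit formulas (\ref{eq:Adqs4cap})--(\ref{eq:Adqs4cbp}) are then read off directly from (\ref{eq:Adqs3bhp}) together with (\ref{eq:Adqs3gp}) and (\ref{eq:Adqs3fp}) under the circuit sign constraints already established in Section~\ref{sec:circuit2}. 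The hypothesis that all circuit parameters are non-zero ensures $\xi_{1}\xi_{2}\neq 0$, so $h_{\delta}\neq 0$ and consequently $s_{0}\neq 0$; this is what makes the two blocks in (\ref{eq:Adqs4cp}) correspond to genuinely distinct eigenvalues $s_{0}$ and $-s_{0}$.

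The remaining --- and, in my view, the only real --- obstacle is to show that at the degeneracy the \emph{geometric} multiplicity of each of $\pm s_{0}$ equals $1$, not $2$, which is precisely what distinguishes the two-block Jordan form (\ref{eq:Adqs4cp}) from a diagonalizable situation. I would attack this directly from the eigenvalue problem (\ref{eq:Adqs1ep}). Any eigenvector is forced to have the block form $\mathsf{q}=[q_{1},q_{2},sq_{1},sq_{2}]^{\top}$, and substituting into the bottom two rows of $(s\mathbb{I}-\mathscr{C})\mathsf{q}=0$ reduces the problem at $s=\pm s_{0}$ to the $2\times 2$ linear system
\begin{equation*}
\bigl(s^{2}+\xi_{1}\bigr)q_{1}-\tfrac{G_{1}}{L_{1}}\,sq_{2}=0,\qquad \tfrac{G_{1}}{L_{2}}\,sq_{1}+\bigl(s^{2}+\xi_{2}\bigr)q_{2}=0.
\end{equation*}
A direct expansion shows that the determinant of this system equals $\chi(s)$, which vanishes at $s=\pm s_{0}$ by the degeneracy condition, so the matrix is singular. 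However its off-diagonal entries $\mp\tfrac{G_{1}}{L_{j}}s_{0}$ are non-zero because $G_{1}\neq 0$ and $s_{0}\neq 0$, so its rank is exactly $1$ and the eigenspace is one-dimensional. This shows each of $\pm s_{0}$ has geometric multiplicity $1$, which together with algebraic multiplicity $2$ forces a single $2\times 2$ Jordan block at each eigenvalue and yields (\ref{eq:Adqs4cp}).

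The converse direction is immediate: if $\mathscr{C}$ is similar to (\ref{eq:Adqs4cp}), its characteristic polynomial equals $(s^{2}-s_{0}^{2})^{2}$, a perfect square in $h=s^{2}$, which forces $\Delta_{h}=0$. Finally, in the special case $|\xi_{1}|=|\xi_{2}|$ formula (\ref{eq:Adqs3gp}) with $\delta=-1$ gives $g_{-1}=0$, which is ruled out by $G_{1}\neq 0$; only $g_{1}=4|\xi_{1}|\,|L_{1}L_{2}|$ survives, as recorded in (\ref{eq:Adqs4dp}). When $g=0$ the two loops decouple, the evolution matrix becomes block-diagonal in two independent $LC$ blocks each with a pair of simple eigenvalues, and is therefore diagonalizable as claimed.
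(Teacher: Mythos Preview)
The paper does not actually prove this theorem here; it is quoted from the earlier work \cite{FigSynbJ}, with Section~\ref{sec:circuit2} serving only as a review that derives the degeneracy condition (\ref{eq:Adqs3ap}) and the explicit formulas (\ref{eq:Adqs3gp}), (\ref{eq:Adqs3fp}) for $g_{\delta}$ and $h_{\delta}$. So there is no in-paper proof to compare against, and your argument stands on its own.

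Your proof is correct and complete. The reduction of (\ref{eq:Adqs1ep}) to the $2\times 2$ system in $(q_{1},q_{2})$, the identification of its determinant with $\chi(s)$, and the observation that the off-diagonal entries $\mp\tfrac{G_{1}}{L_{j}}s_{0}$ are non-zero (since $G_{1}\neq 0$ and $s_{0}^{2}=h_{\delta}\neq 0$) is exactly the right computation: it pins the rank at $1$, hence geometric multiplicity $1$, and together with algebraic multiplicity $2$ forces a single $2\times 2$ Jordan block at each of $\pm s_{0}$. The converse via $\chi(s)=(s^{2}-s_{0}^{2})^{2}\Rightarrow\Delta_{h}=0$ and the handling of the special case $|\xi_{1}|=|\xi_{2}|$ are both fine. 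The formulas (\ref{eq:Adqs4cap})--(\ref{eq:Adqs4cbp}) that you defer to the earlier equations are indeed already derived verbatim in Section~\ref{sec:circuit2}, so citing them is appropriate.
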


\begin{rem}[instability and marginal stability]
\label{rem:stability} Notice according to formula (\ref{eq:Adqs4cbp})
degenerate eigenvalues $\pm s_{0}$ are real for $\delta=\sigma$
and hence they correspond to exponentially growing and decaying in
time solutions indicating instability. For $\delta=-\sigma$ the degenerate
eigenvalues $\pm s_{0}$ are pure imaginary corresponding to oscillatory
solutions indicating that there is at least marginal stability.
\end{rem}

To get a graphical illustration for the circuit complex-valued eigenvalues
as functions of the gyration parameter $g$ we use the following data
\begin{equation}
\left|\xi_{1}\right|=1,\quad\left|\xi_{2}\right|=2,\quad\left|L_{1}\right|=1,\quad\left|L_{2}\right|=2,\quad\sigma=1,\label{eq:xiLdat1ap}
\end{equation}
 and that corresponds to
\begin{equation}
\xi_{1}=1,\quad\xi_{2}=2,\quad L_{1}=\pm1,\quad L_{2}=\mp2.\label{eq:xiLdat1bp}
\end{equation}
It follows then from representation (\ref{eq:Adqs3gp}) that the corresponding
special values $g_{\delta}$ are
\begin{equation}
g_{-1}=2\left(1-\sqrt{2}\right)^{2}\cong0.3431457498,\quad g_{1}=2\left(1+\sqrt{2}\right)^{2}\cong11.65685425.\label{eq:xiLdat1cp}
\end{equation}
\begin{figure}[h]
\centering{}\includegraphics[scale=0.5]{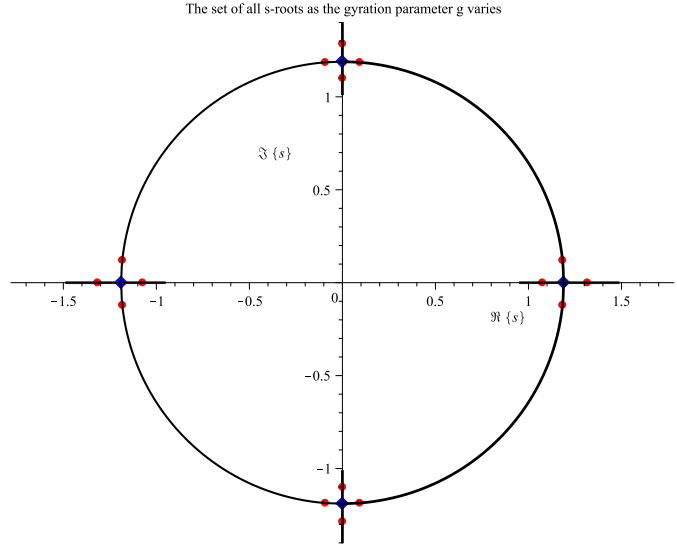}\caption{\label{fig:eigenvals} The plot shows the set $S_{\mathrm{eig}}$
of all complex valued eigenvalues $s$ defined by equations (\ref{eq:Adqs2ehsp})
for the data in equations (\ref{eq:xiLdat1ap}), (\ref{eq:xiLdat1bp})
when the gyration parameter $g$ varies in interval containing special
values $g_{-1}$ and $g_{1}$ defined in relations (\ref{eq:xiLdat1cp}).
The horizontal and vertical axes represent the real and the imaginary
parts $\Re\left\{ s\right\} $ and $\Im\left\{ s\right\} $ of eigenvalues
$s$. $S_{\mathrm{eig}}$ consists of the circle centered in the origin
of radius $\sqrt[4]{\left|\xi_{1}\right|\left|\xi_{2}\right|}$ and
four intersecting its intervals lying on real and imaginary axes.
The circular part of the set $S_{\mathrm{eig}}$ corresponds to all
eigenvalues for $g_{-1}\protect\leq g\protect\leq$$g_{1}$. The degenerate
eigenvalues $\pm s_{0}$ corresponding to $g_{-1}$ and $g_{1}$ and
defined by equations (\ref{eq:Adqs4cbp}) are shown as solid diamond
(blue) dots. Two of them are real, positive and negative, numbers
and another two are pure imaginary, with positive and negative imaginary
parts. The 16 solid circular (red) dots are associated with 4 quadruples
of eigenvalues corresponding to 4 different values of the gyration
parameter $g$ chosen to be slightly larger or smaller than the special
values $g_{-1}$ and $g_{1}$. Let us take a look at any of the degenerate
eigenvalues identified by solid diamond (blue) dots. If $g$ is slightly
different from its special values $g_{-1}$ and $g_{1}$ then each
degenerate eigenvalue point splits into a pair of points identified
by solid circular (red) dots. They are either two real or two pure
imaginary points if $g$ is outside the interval $\left[g_{-1},g_{1}\right]$,
or alternatively they are two points lying on the circle, if $g$
is inside the interval $\left[g_{-1},g_{1}\right]$.}
\end{figure}

To explain the rise of the circular part of the set $S_{\mathrm{eig}}$
in Fig. \ref{fig:eigenvals} we recast the characteristic equation
(\ref{eq:Adqs2dp}) as follows
\begin{equation}
H+\frac{1}{H}=R,\quad R=\frac{\sigma}{\sqrt{\left|\xi_{1}\right|\left|\xi_{2}\right|}}\left[\frac{g}{\left|L_{1}\right|\left|L_{2}\right|}-\left(\left|\xi_{1}\right|+\left|\xi_{2}\right|\right)\right],\quad H=\frac{h}{\sqrt{\left|\xi_{1}\right|\left|\xi_{2}\right|}}.\label{eq:charHR1ap}
\end{equation}
Notice that
\begin{equation}
R=2\delta\sigma,\text{for }g=g_{\delta}=\left(\sqrt{\left|\xi_{1}\right|}+\delta\sqrt{\left|\xi_{2}\right|}\right)^{2}\left|L_{1}L_{2}\right|,\quad\delta=\pm1,\label{eq:charHR1bp}
\end{equation}
where $\sigma=\pm1$ is the circuit sign index. Since $R$ depends
linearly on $g$ relations (\ref{eq:charHR1ap}) and (\ref{eq:charHR1bp})
imply
\begin{equation}
\left|R\right|\leq2,\text{ for }g_{-1}\leq g\leq g_{1};\quad\left|R\right|>2,\text{ for }g<g_{-1}\text{ and }g>g_{1}.\label{eq:charHR1c}
\end{equation}
It is an elementary fact that solutions $H$ to equation (\ref{eq:charHR1ap})
satisfy the following relations:
\begin{equation}
H=\exp\left\{ \mathrm{i}\theta\right\} ,\text{ for }\left|R\right|\leq2,\text{ and }\cos\left(\theta\right)=\frac{R}{2},\quad0\leq\theta\leq\pi,\label{eq:charHR1dp}
\end{equation}
\begin{equation}
H>0,\text{ for }R>2,\text{ and }H<0,\text{ for }R<-2.\label{eq:charHR1ep}
\end{equation}
It is also evident from the form of equation (\ref{eq:charHR1ap})
that if $H$ is its solution then $H^{-1}$ is a solution as well,
that is the two solutions to equation (\ref{eq:charHR1ap}) always
come in pairs of the form $\left\{ H,H^{-1}\right\} $.

Since the eigenvalues $s$ satisfy $s=\pm\sqrt{h}$ the established
above properties of $h=\sqrt{\left|\xi_{1}\right|\left|\xi_{2}\right|}H$
can recast for $s$ as follows.
\begin{thm}[quadruples of eigenvalues]
\label{thm:quad} For every $g>0$ every solution $s$ to the characteristic
equation (\ref{eq:Adqs2asp}) is of the form (\ref{eq:Adqs2ehsp})
and the number of solutions is exactly four counting their multiplicity.
Every such a quadruple of solutions is of the following form
\begin{equation}
\left\{ s,\frac{\sqrt{\left|\xi_{1}\right|\left|\xi_{2}\right|}}{s},-s,-\frac{\sqrt{\left|\xi_{1}\right|\left|\xi_{2}\right|}}{s}\right\} ,\label{eq:charHR2ap}
\end{equation}
where $s$ is a solution to the characteristic equation (\ref{eq:Adqs2asp}.
Then for $g_{-1}\leq g\leq g_{1}$ the quadruple of solutions belongs
to the circle $\left|s\right|=\sqrt[4]{\left|\xi_{1}\right|\left|\xi_{2}\right|}$
such that
\begin{equation}
s=\delta_{1}\sqrt[4]{\left|\xi_{1}\right|\left|\xi_{2}\right|}\exp\left\{ \mathrm{i}\delta_{2}\theta\right\} ,\cos\left(\theta\right)=\frac{R}{2},\quad0\leq\theta\leq\pi,\delta_{1},\delta_{2}=\pm1,\label{eq:charHR2bp}
\end{equation}
where $R$ is defined in relations (\ref{eq:charHR1ap}). If $g<g_{-1}$
or $g>g_{1}$ the quadruple of solutions consists of either real numbers
and pure imaginary numbers depending if $R>2$ or $R<-2$ respectively.
In view of relations (\ref{eq:charHR1ap}) and $s=\pm\sqrt{h}$ where
$h=\sqrt{\left|\xi_{1}\right|\left|\xi_{2}\right|}H$ every quadruple
of solutions as in expression (\ref{eq:charHR2ap}) is invariant with
respect to the complex conjugation transformation.
\end{thm}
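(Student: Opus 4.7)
The plan is to derive each assertion of the theorem by translating, one by one, the already-established facts about $H$ in equations (\ref{eq:charHR1ap})--(\ref{eq:charHR1ep}) through the two-step chain $s \mapsto h = s^{2} \mapsto H = h/\sqrt{|\xi_{1}||\xi_{2}|}$. First I would invoke the fundamental theorem of algebra on the degree-four polynomial $\chi(s)$ in (\ref{eq:Adqs2ap}): it has exactly four roots counted with multiplicity. Because $\chi$ depends only on $s^{2}$, the substitution $h = s^{2}$ reduces the problem to the quadratic $\chi_{h}$ in (\ref{eq:Adqs2dp}), whose two roots $h_{\pm}$ each generate a $\pm$-pair of $s$-roots via $s = \pm\sqrt{h_{\pm}}$. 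This accounts for both the form (\ref{eq:Adqs2ehsp}) and the count of four.

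Next I would establish the quadruple structure (\ref{eq:charHR2ap}). By Vieta's formula applied to $\chi_{h}$, the product of its roots is $h_{+}h_{-} = \xi_{1}\xi_{2}$. Under the circuit sign constraint this equals $|\xi_{1}||\xi_{2}|$, so $h_{-} = |\xi_{1}||\xi_{2}|/h_{+}$. Thus if $s$ satisfies $s^{2} = h_{+}$, the two roots of $s^{2} = h_{-}$ are $\pm\sqrt{|\xi_{1}||\xi_{2}|}/s$, which together with $\pm s$ assembles the quadruple (\ref{eq:charHR2ap}). This is precisely the $\{H, H^{-1}\}$ symmetry of (\ref{eq:charHR1ap}) lifted through the square root.

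For the location of the quadruple I would separate the two regimes using (\ref{eq:charHR1c}). When $g_{-1} \le g \le g_{1}$ one has $|R| \le 2$, and (\ref{eq:charHR1dp}) yields $H = e^{\pm i\theta}$ with $\cos\theta = R/2$, so $|h| = \sqrt{|\xi_{1}||\xi_{2}|}$ and therefore $|s| = \sqrt[4]{|\xi_{1}||\xi_{2}|}$, confirming the circle. The explicit angular parametrization (\ref{eq:charHR2bp}) is then read off by taking square roots of $h$: the sign $\delta_{1}$ encodes the $\pm\sqrt{\cdot}$ branch and $\delta_{2}$ encodes the $H \leftrightarrow H^{-1}$ swap. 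For $|R| > 2$, equation (\ref{eq:charHR1ep}) gives $H$ real, hence $h$ real; then $s = \pm\sqrt{h}$ is real if $h > 0$ (case $R > 2$) and pure imaginary if $h < 0$ (case $R < -2$), matching the stated dichotomy.

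Finally, invariance under complex conjugation is immediate from reality of the coefficients of $\chi(s)$ (all of $\xi_{1}, \xi_{2}, L_{1}, L_{2}, g$ being real), so conjugation permutes the root set; alternatively it can be verified case-by-case against the three regimes above. I expect the only delicate point will be the bookkeeping of square-root branches in the angular formula (\ref{eq:charHR2bp}) --- in particular reconciling the parametrization on the $H$-side (where $\theta$ is defined by $\cos\theta = R/2$) with the parametrization on the $s$-side --- and checking that as $g$ varies continuously through $[g_{-1},g_{1}]$ the four $s$-roots sweep the circle consistently. This is a routine but careful enumeration of the four choices of $(\delta_{1},\delta_{2})$ against the two choices $H = e^{\pm i\theta}$; no genuine obstacle arises, and the theorem is in essence a restatement of (\ref{eq:charHR1ap})--(\ref{eq:charHR1ep}) in the $s$-variable.
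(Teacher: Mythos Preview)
Your proposal is correct and follows essentially the same route as the paper. The paper does not give a separate proof environment for this theorem; it simply establishes the properties of $H$ in equations (\ref{eq:charHR1ap})--(\ref{eq:charHR1ep}) and the $\{H,H^{-1}\}$ pairing, then states that ``since the eigenvalues $s$ satisfy $s=\pm\sqrt{h}$ the established above properties of $h=\sqrt{|\xi_1||\xi_2|}\,H$ can [be] recast for $s$ as follows,'' which is exactly your translation through the chain $s\mapsto h=s^2\mapsto H$. Your use of Vieta on $\chi_h$ to get $h_+h_-=\xi_1\xi_2=|\xi_1||\xi_2|$ is just the $s$-side formulation of the paper's observation that $H$ and $H^{-1}$ are paired solutions of $H+H^{-1}=R$, and your caveat about reconciling the angular parameter in (\ref{eq:charHR2bp}) with the $\theta$ of (\ref{eq:charHR1dp}) via the square root is the one place where the paper, too, leaves the bookkeeping implicit.
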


The following remark discusses in some detail the transition of eigenvalues
lying on the circle $\left|s\right|=\sqrt[4]{\left|\xi_{1}\right|\left|\xi_{2}\right|}$
having non-zero real and imaginary parts into either real or pure
imaginary numbers as the value of the gyration parameter $g$ passes
through its special values $g_{-1}$ or $g_{1}$ at which the eigenvalues
degenerate.
\begin{rem}[transition at degeneracy points]
 \label{rem:degpoint}According to formula (\ref{eq:Adqs4cbp}) there
is total of four degenerate eigenvalues $\pm s_{0}$, namely $\pm$$\sqrt[4]{\left|\xi_{1}\right|\left|\xi_{2}\right|}$
and $\pm\mathrm{i}$$\sqrt[4]{\left|\xi_{1}\right|\left|\xi_{2}\right|}$
(depicted as solid diamond (blue) dots in Fig. \ref{fig:eigenvals})
that are associated with the two special values of the gyration parameter
$g_{\pm1}=\left(\sqrt{\left|\xi_{1}\right|}\pm\sqrt{\left|\xi_{2}\right|}\right)^{2}\left|L_{1}L_{2}\right|$.
For any value of the gyration parameter $g$ different than its two
special values there are exactly four distinct eigenvalues $s$ forming
a quadruple as in expression (\ref{eq:charHR2ap}). If $g_{-1}<g<g_{1}$
and $g$ gets close to either $g_{-1}$ or $g_{1}$ the corresponding
four distinct eigenvalues on the circle $\left|s\right|=\sqrt[4]{\left|\xi_{1}\right|\left|\xi_{2}\right|}$
get close to either $\pm$$\sqrt[4]{\left|\xi_{1}\right|\left|\xi_{2}\right|}$
or $\pm\mathrm{i}$$\sqrt[4]{\left|\xi_{1}\right|\left|\xi_{2}\right|}$
as depicted in Fig. \ref{fig:eigenvals}) by solid circle (red) dots.
As $g$ approaches the special values $g_{-1}$ or $g_{1}$, reaches
them and gets out of the interval $\left[g_{-1},g_{1}\right]$ the
corresponding solid circle (red) dots approach the relevant points
$\pm$$\sqrt[4]{\left|\xi_{1}\right|\left|\xi_{2}\right|}$ or $\pm\mathrm{i}$$\sqrt[4]{\left|\xi_{1}\right|\left|\xi_{2}\right|}$
, merge at them and then split again passing to respectively real
and imaginary axes as illustrated by Fig. \ref{fig:eigenvals}.
\end{rem}

\section{Perturbation theory for the circuit\label{sec:percir2}}

We develop here the perturbation theory for the circuit in Fig. \ref{fig:pri-cir2}
at its EPD point. The circuit and its relevant properties are reviewed
in Section \ref{sec:circuit2}. Our primary case of interest here
is when the circuit parameters satisfy
\begin{equation}
\xi_{1},\xi_{2},-L_{1}L_{2}>0,\quad R\cong-2,\label{eq:xixiL1a}
\end{equation}
implying in view of the sign constraints (\ref{eq:Adqs2bp}) and relations
(\ref{eq:charHR1bp}) the following relations for the circuit parameters
at the EPD point:
\begin{equation}
\sigma=1,\quad\delta=-1;\quad R=2\delta\sigma=-2,\text{for }g=g_{_{-1}}=-\left(\sqrt{\xi_{1}}-\sqrt{\xi_{2}}\right)^{2}L_{1}L_{2}.\label{eq:xixiL1b}
\end{equation}
Then the frequencies
\begin{equation}
\omega_{1}=\frac{1}{\sqrt{L_{1}C_{1}}}=\sqrt{\xi_{1}}>0,\quad\omega_{2}=\frac{1}{\sqrt{L_{2}C_{2}}}=\sqrt{\xi_{2}}>0.\label{eq:xixiL1c}
\end{equation}
 associated with the circuit $LC$-loops are real, and also according
to Theorem \ref{thm:quad} and Remark \ref{rem:degpoint} the circuit
degenerate frequency at the EPD is
\begin{equation}
\dot{\omega}=\sqrt[4]{\xi_{1}\xi_{2}}=\sqrt{\omega_{1}\omega_{2}}>0,\label{eq:xixiL1ca}
\end{equation}
where $\xi_{1},\xi_{2}$ and $\omega_{1},\omega_{2}$ are the values
of the parameters ad the EPD point. Equation (\ref{eq:xixiL1ca})
indicate that the circuit at the EPD is at least marginally stable
explaining why this EPD point satisfying equations (\ref{eq:xixiL1a})
and (\ref{eq:xixiL1b}) is of our primary interest.

Let us turn now to the characteristic equation (\ref{eq:charHR1ap}).
This equation under conditions (\ref{eq:xixiL1a}) and in view of
relations (\ref{eq:xixiL1b}) takes here the form
\begin{gather}
H+\frac{1}{H}=-2-r,\nonumber \\
r=\frac{1}{\sqrt{\xi_{1}\xi_{2}}}\left[\frac{g}{L_{1}L_{2}}+\left(\xi_{1}+\xi_{2}\right)\right]+2,\quad H=\frac{h}{\sqrt{\xi_{1}\xi_{2}}},\label{eq:xixiL1d}
\end{gather}
where
\begin{equation}
r=0,\text{for }g=g_{_{-1}}=-\left(\sqrt{\xi_{1}}-\sqrt{\xi_{2}}\right)^{2}L_{1}L_{2}.\label{eq:xixiL1e}
\end{equation}
Let us consider now a fixed EPD point with the circuit parameters
\begin{equation}
\mathring{L}_{j},\quad\mathring{C}_{j},\quad j=1,2;\quad\mathring{g}=-\left(\sqrt{\dot{\xi}_{1}}-\sqrt{\dot{\xi}_{2}}\right)^{2}\dot{L}_{1}\dot{L}_{2}\label{eq:xixiL2a}
\end{equation}
and its perturbation of the form

\begin{equation}
L_{j}=\mathring{L}_{j}\left(1+\Delta\left(L_{j}\right)\epsilon\right),\quad C_{j}=\mathring{C}_{j}\left(1+\Delta\left(C_{j}\right)\epsilon\right),\quad j=1,2;\quad g=\mathring{g}\left(1+\Delta\left(g\right)\epsilon\right),\label{eq:xixiL2b}
\end{equation}
where $\epsilon$ is a small real-valued parameter and $\Delta\left(L_{j}\right)$,
$\Delta\left(C_{j}\right)$ and $\Delta\left(g\right)$ are real-valued
coefficients that weight the variation of the . We often will assume
that it is only one of these coefficients differs from zero.

Note that if $r$ is small then the two solutions $H$ to equation
(\ref{eq:xixiL1d}) satisfy the following asymptotic formula
\begin{equation}
H=-1\pm\sqrt{r}+O\left(r\right),\quad r\rightarrow0.\label{eq:xixiL2c}
\end{equation}
For the circuit parameters as in equations (\ref{eq:xixiL2b}) we
have
\begin{equation}
r=\Delta\left(R\right)\epsilon+O\left(\epsilon^{2}\right),\quad\epsilon\rightarrow0,\label{eq:xixiL2d}
\end{equation}
where
\begin{gather}
\Delta\left(R\right)=\left(\frac{\sqrt{\xi_{1}}}{\sqrt{\xi_{2}}}-1\right)\Delta_{1}\left(R\right)+\left(\frac{\sqrt{\xi_{2}}}{\sqrt{\xi_{1}}}-1\right)\Delta_{2}\left(R\right),\label{eq:xixiL2e}\\
\Delta_{1}\left(R\right)=\Delta\left(L_{2}\right)-\Delta\left(C_{1}\right)+\Delta\left(g\right),\quad\Delta_{2}\left(R\right)=\Delta\left(L_{1}\right)-\Delta\left(C_{2}\right)+\Delta\left(g\right).\nonumber 
\end{gather}
Using equation $H=\frac{h}{\sqrt{\xi_{1}\xi_{2}}}$ and taking the
square root of the two solutions $H$ defined by equations (\ref{eq:xixiL2c})
and (\ref{eq:xixiL2d}) we obtain after elementary algebraic transformation
four solutions $s$ for the original characteristic equation (\ref{eq:Adqs2asp}),
(\ref{eq:Adqs2bp}) 
\begin{equation}
s=\mathrm{i}\omega=\mathrm{i}\sqrt[4]{\dot{\xi}_{1}\dot{\xi}_{2}}\left(1\pm\frac{1}{2}\sqrt{\Delta\left(R\right)\epsilon}+O\left(\epsilon\right)\right),\;-\mathrm{i}\sqrt{\dot{\xi}_{1}\dot{\xi}_{2}}\left(1\pm\frac{1}{2}\sqrt{\Delta\left(R\right)\epsilon}+O\left(\epsilon\right)\right),\quad\epsilon\rightarrow0,\label{eq:xixiL3a}
\end{equation}
where coefficient $\Delta\left(R\right)$ satisfies representation
(\ref{eq:xixiL2e}). Evidently the above solutions are pure imaginary
if $\Delta\left(R\right)\epsilon>0$.

Equations (\ref{eq:xixiL3a}) and (\ref{eq:xixiL1ca}) imply the following
expressions for the two split frequencies and their relative difference
\begin{equation}
\omega_{\pm}=\sqrt[4]{\dot{\xi}_{1}\dot{\xi}_{2}}\left(1\pm\frac{1}{2}\sqrt{\Delta\left(R\right)\epsilon}+O\left(\epsilon\right)\right)=\dot{\omega}\left(1\pm\frac{1}{2}\sqrt{\Delta\left(R\right)\epsilon}+O\left(\epsilon\right)\right),\label{eq:xixiL3aa}
\end{equation}
\begin{equation}
\frac{\omega_{+}-\omega_{-}}{\dot{\omega}}=\sqrt{\Delta\left(R\right)\epsilon}+O\left(\epsilon\right).\label{eq:xixiL3ab}
\end{equation}
We refer to the difference $\omega_{+}-\omega_{-}$ as \emph{frequency
split} and to $\frac{\omega_{+}-\omega_{-}}{\dot{\omega}}$ as \emph{relative
frequency split}.

Using frequencies $\omega_{1}$ and $\omega_{2}$ defined by equations
(\ref{eq:xixiL1c}) the representation (\ref{eq:xixiL2e}) for $\Delta\left(R\right)$
can be recast as

\begin{gather}
\Delta\left(R\right)=\left(\frac{\omega_{1}}{\omega_{2}}-1\right)\Delta_{1}\left(R\right)+\left(\frac{\omega_{2}}{\omega_{1}}-1\right)\Delta_{2}\left(R\right),\label{eq:xixiL3b}\\
\Delta_{1}\left(R\right)=\Delta\left(C_{1}\right)-\Delta\left(L_{2}\right)-\Delta\left(g\right),\quad\Delta_{2}\left(R\right)=\Delta\left(C_{2}\right)-\Delta\left(L_{1}\right)-\Delta\left(g\right).\nonumber 
\end{gather}
Notice that the following elementary inequality holds
\begin{equation}
\xi-1>1-\frac{1}{\xi}=\frac{\xi-1}{\xi},\text{ for }\xi>1,\label{eq:xixiL3c}
\end{equation}
readily implying

\begin{equation}
\left|\frac{\omega_{1}}{\omega_{2}}-1\right|>\left|\frac{\omega_{2}}{\omega_{1}}-1\right|\text{ if }\omega_{1}>\omega_{2}.\label{eq:xixi3d}
\end{equation}
Notice also the equation (\ref{eq:xixiL1ca}) for the circuit degenerate
frequency $\dot{\omega}$ at the EPD implies
\begin{equation}
\omega_{1}>\dot{\omega}>\omega_{2}\text{ if }\omega_{1}>\omega_{2}.\label{eq:xixiL3e}
\end{equation}

Assuming that the only circuit parameter that varies is capacitance
$C_{1}$ and that its relative change $\Delta\left(C_{1}\right)=\frac{C_{1}-\dot{C}_{1}}{\dot{C}_{1}}$
is small we obtain from equations (\ref{eq:xixiL3ab}) for $\epsilon=1$
\begin{equation}
\frac{\omega_{+}-\omega_{-}}{\dot{\omega}}\cong\sqrt{-\left(\frac{\omega_{1}}{\omega_{2}}-1\right)\Delta\left(C_{1}\right)}=\sqrt{-\left(\frac{\omega_{1}}{\omega_{2}}-1\right)\frac{C_{1}-\dot{C}_{1}}{\dot{C}_{1}}}.\label{eq:xixiL5a}
\end{equation}
Similarly under the assumption of smallness of the relative change
of the varying variables $L_{1}$, $C_{2}$, $L_{2}$, and $g$ we
get respectively the following relations
\begin{equation}
\frac{\omega_{+}-\omega_{-}}{\dot{\omega}}\cong\sqrt{\left(\frac{\omega_{2}}{\omega_{1}}-1\right)\Delta\left(L_{1}\right)}=\sqrt{\left(\frac{\omega_{2}}{\omega_{1}}-1\right)\frac{L_{1}-\dot{L}_{1}}{\dot{L}_{1}}},\label{eq:xixiL5b}
\end{equation}
\begin{equation}
\frac{\omega_{+}-\omega_{-}}{\dot{\omega}}\cong\sqrt{-\left(\frac{\omega_{2}}{\omega_{1}}-1\right)\Delta\left(C_{2}\right)}=\sqrt{-\left(\frac{\omega_{2}}{\omega_{1}}-1\right)\frac{C_{2}-\dot{C}_{2}}{\dot{C}_{2}}},\label{eq:xixiL5c}
\end{equation}
\begin{equation}
\frac{\omega_{+}-\omega_{-}}{\dot{\omega}}\cong\sqrt{\left(\frac{\omega_{1}}{\omega_{2}}-1\right)\Delta\left(L_{2}\right)}=\sqrt{\left(\frac{\omega_{1}}{\omega_{2}}-1\right)\frac{L_{2}-\dot{L}_{2}}{\dot{L}_{2}}},\label{eq:xixiL5d}
\end{equation}
\begin{equation}
\frac{\omega_{+}-\omega_{-}}{\dot{\omega}}\cong\sqrt{\left(2-\frac{\omega_{1}}{\omega_{2}}-\frac{\omega_{2}}{\omega_{1}}\right)\Delta\left(g\right)}=\sqrt{\left(2-\frac{\omega_{1}}{\omega_{2}}-\frac{\omega_{2}}{\omega_{1}}\right)\frac{g-\dot{g}}{\dot{g}}}.\label{eq:xixiL5e}
\end{equation}
Notice that for the quantities under the sign of square root in relations
(\ref{eq:xixiL5a})-(\ref{eq:xixiL5e}) to be positive the relevant
relative changes of the circuit parameters must have proper signs.

In view of equations (\ref{eq:xixiL3a}), (\ref{eq:xixiL3b}) and
inequality (\ref{eq:xixiL3c}) as well as relations (\ref{eq:xixiL5a})-(\ref{eq:xixiL5e})
the following statement holds.
\begin{thm}[rate of change of the eigenfrequencies]
\label{thm:omDelR} Suppose that $\omega_{1}>\omega_{2}$ where the
indicated frequencies are defined by equations (\ref{eq:xixiL1c}).
Then the rate of change of the eigenfrequencies defined by equation
(\ref{eq:xixiL3a}) for small $\epsilon$ is higher as $C_{1}$ and
$L_{2}$ vary compare with the same as $C_{2}$ and $L_{1}$ vary.
Fig. \ref{fig:pertC12} provides graphical representation of stated
difference in the rate of change of the eigenfrequencies for for the
data as in equations (\ref{eq:xixiL4a}) and (\ref{eq:xixi4Lb}).
\end{thm}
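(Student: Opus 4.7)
The plan is to reduce the claim---that for $\omega_1>\omega_2$ variations of $C_1$ or $L_2$ move the eigenfrequencies faster than equally sized variations of $C_2$ or $L_1$---to the elementary inequality (\ref{eq:xixiL3c}) applied to the coefficients appearing in the representation (\ref{eq:xixiL3b}) of $\Delta(R)$. The first step is to invoke the leading-order asymptotic formula (\ref{eq:xixiL3a}) in the form
\[
\omega_\pm-\dot\omega=\pm\tfrac{1}{2}\dot\omega\sqrt{\Delta(R)\epsilon}+O(\epsilon),\qquad\epsilon\to 0,
\]
so that, as made explicit in (\ref{eq:xixiL3ab}), the rate of change of the eigenfrequencies at leading order is governed by $\sqrt{|\Delta(R)|}$, a strictly monotone function of $|\Delta(R)|$. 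Comparing rates of change under single-parameter variations then reduces to comparing the coefficient attached to that variation inside $\Delta(R)$.

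The second step is to read off these coefficients directly from (\ref{eq:xixiL3b}) by setting all but one of the $\Delta(L_j),\Delta(C_j),\Delta(g)$ equal to zero. The parameters $C_1$ and $L_2$ enter $\Delta(R)$ only through $\Delta_1(R)$, so each is weighted by $\omega_1/\omega_2-1$; while $C_2$ and $L_1$ enter only through $\Delta_2(R)$, and so are weighted by $\omega_2/\omega_1-1$. Up to the sign conventions noted after (\ref{eq:xixiL5e}), this reproduces the specialized formulas (\ref{eq:xixiL5a})--(\ref{eq:xixiL5d}).

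The third step applies the hypothesis $\omega_1>\omega_2$: setting $\xi=\omega_1/\omega_2>1$ in (\ref{eq:xixiL3c}) yields (\ref{eq:xixi3d}), namely
\[
\left|\frac{\omega_1}{\omega_2}-1\right|>\left|\frac{\omega_2}{\omega_1}-1\right|,
\]
so that for equally sized relative perturbations $\epsilon$ of a single circuit parameter, $|\Delta(R)|$ is strictly larger when the varying parameter is $C_1$ or $L_2$ than when it is $C_2$ or $L_1$. Strict monotonicity of $\sqrt{\cdot}$ on $[0,\infty)$ transfers this inequality to the relative frequency splits (\ref{eq:xixiL5a})--(\ref{eq:xixiL5d}), which is the assertion; the graphical illustration then follows by substituting the stated numerical data into the same closed-form square roots.

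The only real obstacle is sign bookkeeping: each of (\ref{eq:xixiL5a})--(\ref{eq:xixiL5d}) is a genuine frequency split only when its radicand is nonnegative, which constrains the sign of the relative change $\Delta(\cdot)\epsilon$ so as to keep the perturbed circuit in the marginally stable regime of Remark \ref{rem:stability}; outside this regime the identical formulas describe exponential growth or decay rather than a split of oscillation frequencies. With that sign fixed, the proof is just the strict monotonicity argument above.
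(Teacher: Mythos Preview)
Your proposal is correct and follows exactly the route the paper indicates: the paper does not supply a separate proof but simply states, immediately before the theorem, that it holds ``in view of equations (\ref{eq:xixiL3a}), (\ref{eq:xixiL3b}) and inequality (\ref{eq:xixiL3c}) as well as relations (\ref{eq:xixiL5a})--(\ref{eq:xixiL5e}).'' You have unpacked precisely this chain---leading-order split governed by $\sqrt{|\Delta(R)|}$, the $\Delta_1/\Delta_2$ decomposition of (\ref{eq:xixiL3b}) assigning $C_1,L_2$ the weight $\omega_1/\omega_2-1$ and $C_2,L_1$ the weight $\omega_2/\omega_1-1$, and then the elementary inequality (\ref{eq:xixiL3c})$\Rightarrow$(\ref{eq:xixi3d})---together with the monotonicity of $\sqrt{\cdot}$ and the sign caveat noted after (\ref{eq:xixiL5e}).
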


To get a feel of the considered above perturbation theory let us consider
an example consistent with assumptions (\ref{eq:xixiL1b}) with the
following values of the circuit parameters for an EPD point
\begin{gather}
\dot{C}_{1}=1,\quad\dot{L}_{1}=1,\quad\dot{C}_{2}=-2,\quad\dot{L}_{2}=-1,\label{eq:xixiL4a}\\
\sigma=1,\quad\delta=-1;\quad\dot{R}=2\delta\sigma=-2,\text{for }\dot{g}=g_{_{-1}}=2\left(1-\sqrt{\frac{1}{4}}\right)^{2}=\frac{1}{2}.\nonumber 
\end{gather}
Consequently
\begin{equation}
\dot{\xi}_{1}=1,\quad\dot{\xi}_{2}=\frac{1}{4},\quad\dot{\omega}_{1}=1,\quad\dot{\omega}_{2}=\frac{1}{4},\quad\dot{\omega}_{0}=\frac{1}{2}.\label{eq:xixi4Lb}
\end{equation}

The Figs. \ref{fig:pertC1-1}, \ref{fig:pertC1-2}, \ref{fig:pertC2-1},
\ref{fig:pertC2-2}, \ref{fig:pertC12}, \ref{fig:pertg-1} and \ref{fig:pertg-2}
show the real and imaginary parts of the four eigenvalues $s$ as
in equations (\ref{eq:Adqs2ap}) assuming that (i) the circuit parameters
for the EPD point satisfy equations (\ref{eq:xixiL4a}) and (\ref{eq:xixi4Lb});
(ii) the perturbed circuit parameters satisfy equations (\ref{eq:xixiL2b})
where $\epsilon=1$ and the eigenvalues approximations to be as in
equations (\ref{eq:xixiL3a}) where we assume $\epsilon=1$ and $\Delta\left(R\right)$
is small.
\begin{figure}[h]
\centering{}\includegraphics[scale=0.4]{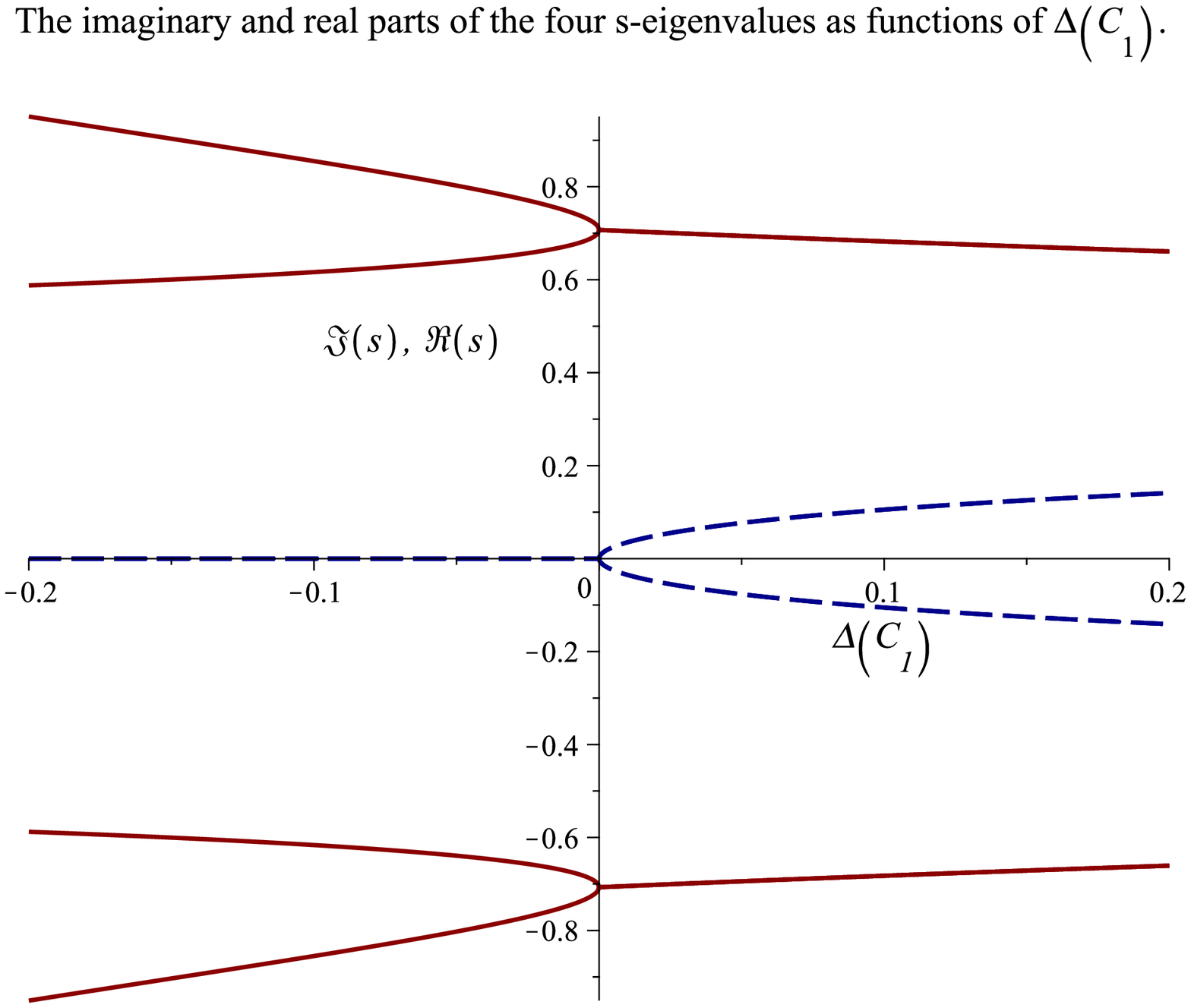}\caption{\label{fig:pertC1-1} This plot shows the real (blue dashed line)
and imaginary (solid brown line) parts of the four eigenvalues $s$
as in equations (\ref{eq:Adqs2ap}) assuming that (i) the circuit
parameters satisfy equations (\ref{eq:xixiL2b}) where $\epsilon=1$,
$\Delta\left(L_{1}\right)=\Delta\left(L_{2}\right)=\Delta\left(C_{2}\right)=\Delta\left(g\right)=0$
and $\Delta\left(C_{1}\right)$ varies. In other words, the plot shows
the variation of the real and imaginary parts of the four eigenvalues
$s$ as function on the capacitance $C_{1}$.}
\end{figure}
\begin{figure}[h]
\begin{centering}
\includegraphics[scale=0.4]{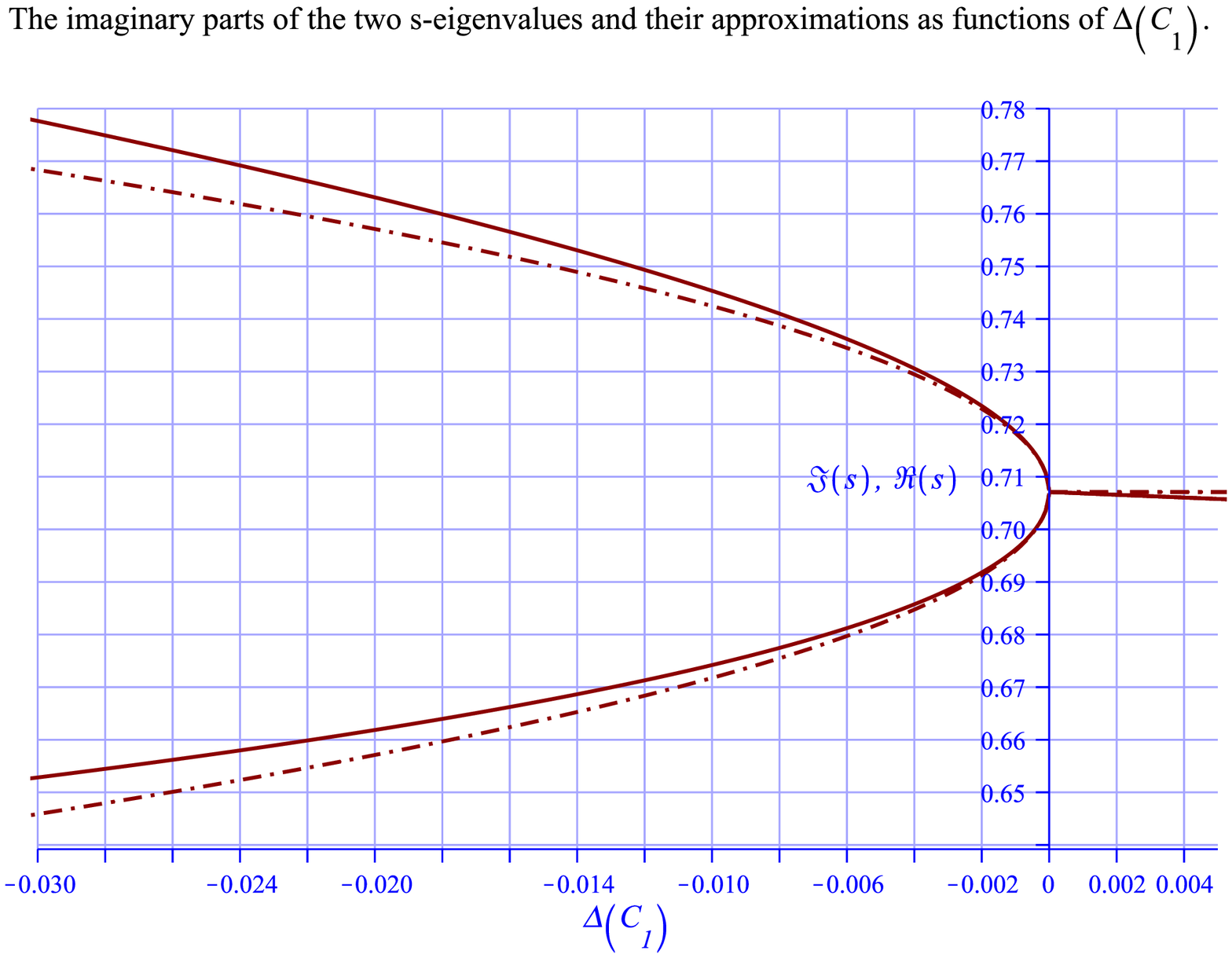}$\hspace{0.5cm}$\includegraphics[scale=0.4]{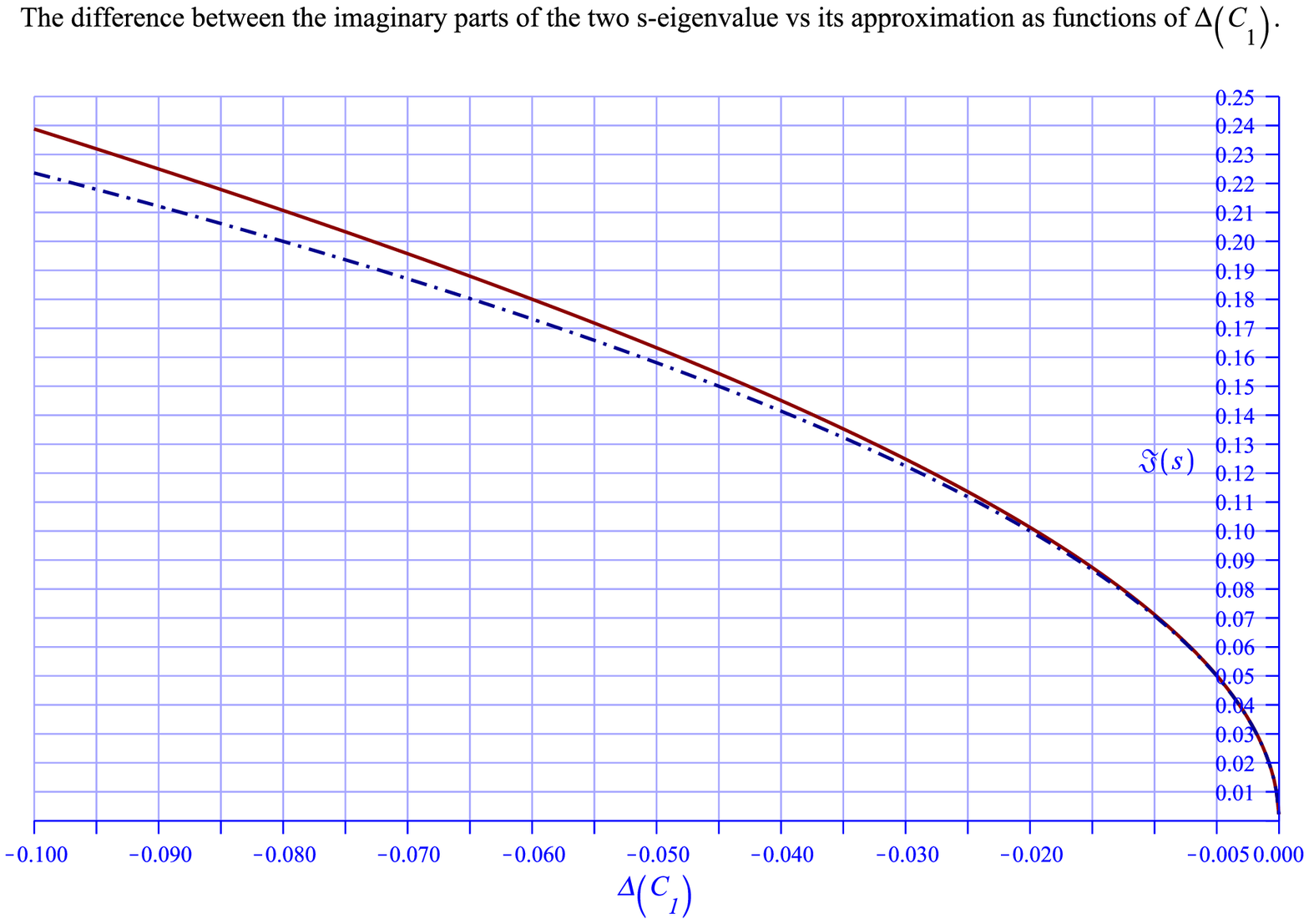}
\par\end{centering}
\centering{}(a)$\hspace{7cm}$(b)\caption{\label{fig:pertC1-2} This plot (a) is fragment of the plot in Figure
\ref{fig:pertC1-1} showing also the imaginary parts of the eigenvalue
approximations (dot-dash line) as in equations (\ref{eq:xixiL3a})
where we assume $\epsilon=1$ and $\Delta\left(R\right)$ to be small.
The plot (b) shows the difference between the imaginary parts of the
eigenvalues as in plot (a) and the corresponding approximation (blue
dash-dot line).}
\end{figure}
\begin{figure}[h]
\centering{}\includegraphics[scale=0.4]{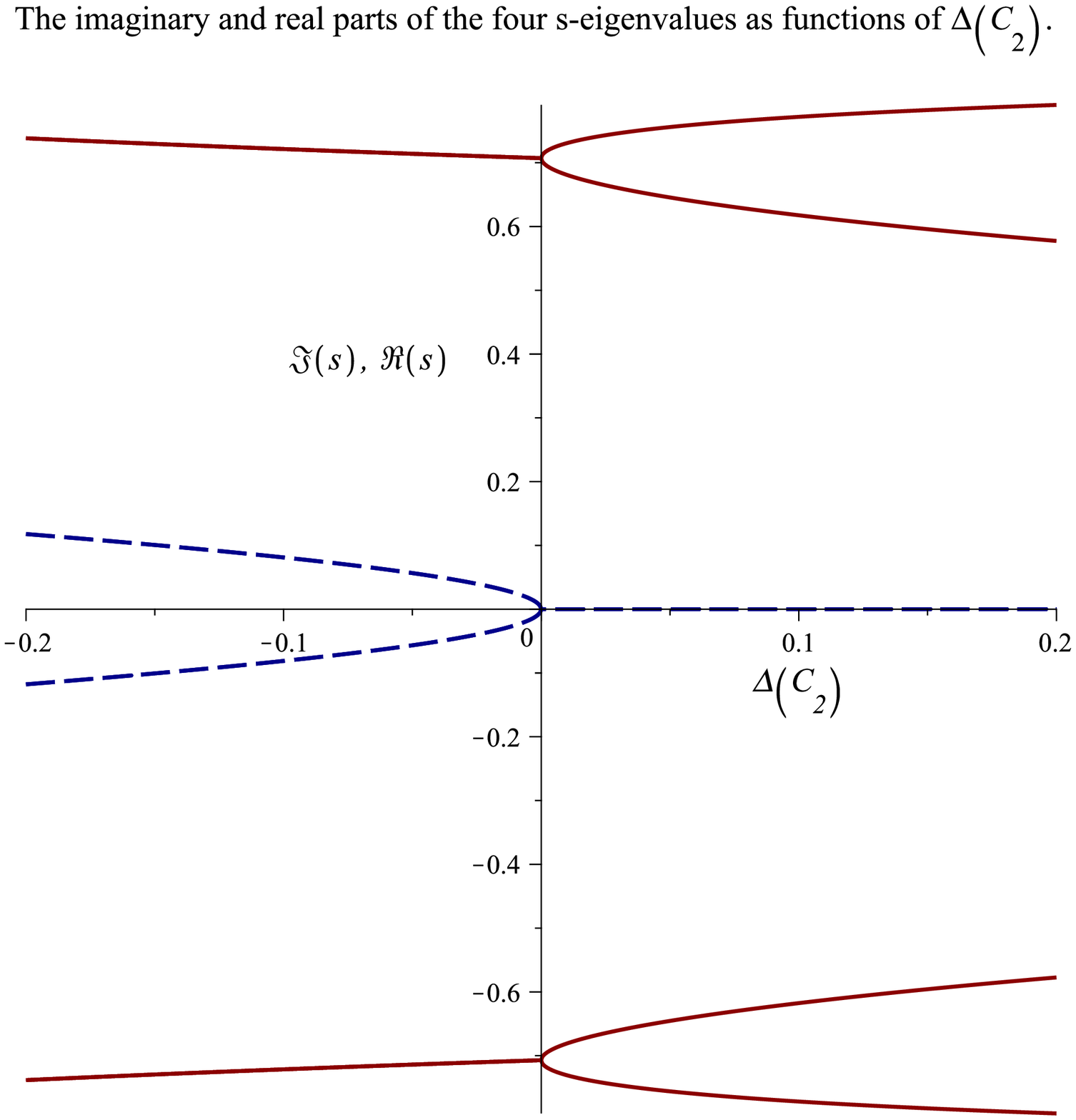}\caption{\label{fig:pertC2-1} This plot shows the real (blue dashed curve)
and imaginary (solid brown curve) the real and imaginary parts of
the four eigenvalues $s$ as in equations (\ref{eq:Adqs2ap}) assuming
that (i) the circuit parameters satisfy equations (\ref{eq:xixiL2b})
where $\epsilon=1$, $\Delta\left(L_{1}\right)=\Delta\left(L_{2}\right)=\Delta\left(C_{1}\right)=\Delta\left(g\right)=0$
and $\Delta\left(C_{2}\right)$ varies. In other words, the plot shows
the variation of the real and imaginary parts of the four eigenvalues
$s$ as function on the capacitance $C_{2}$.}
\end{figure}
\begin{figure}[h]
\begin{centering}
\includegraphics[scale=0.4]{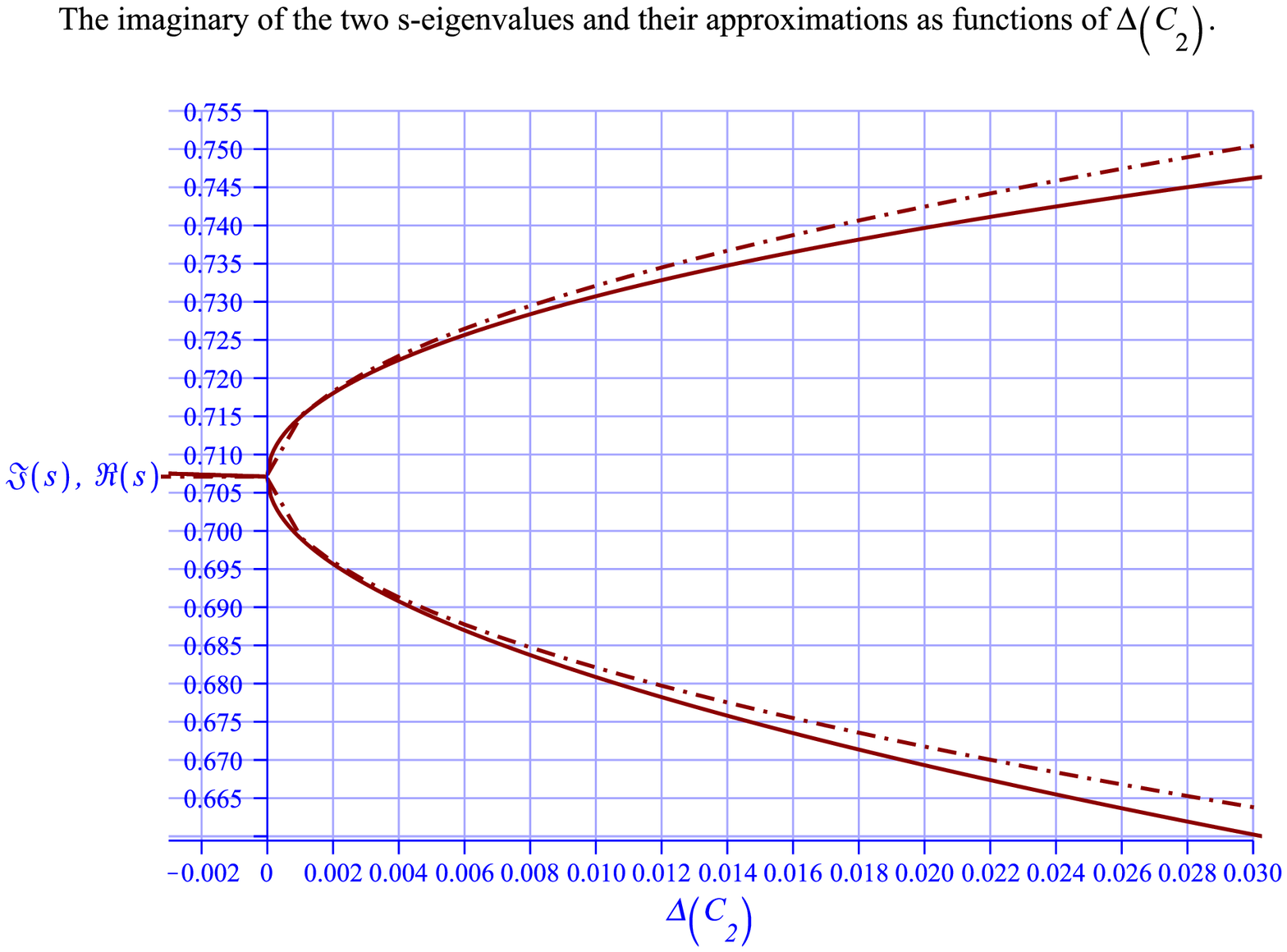}$\hspace{0.5cm}$\includegraphics[scale=0.4]{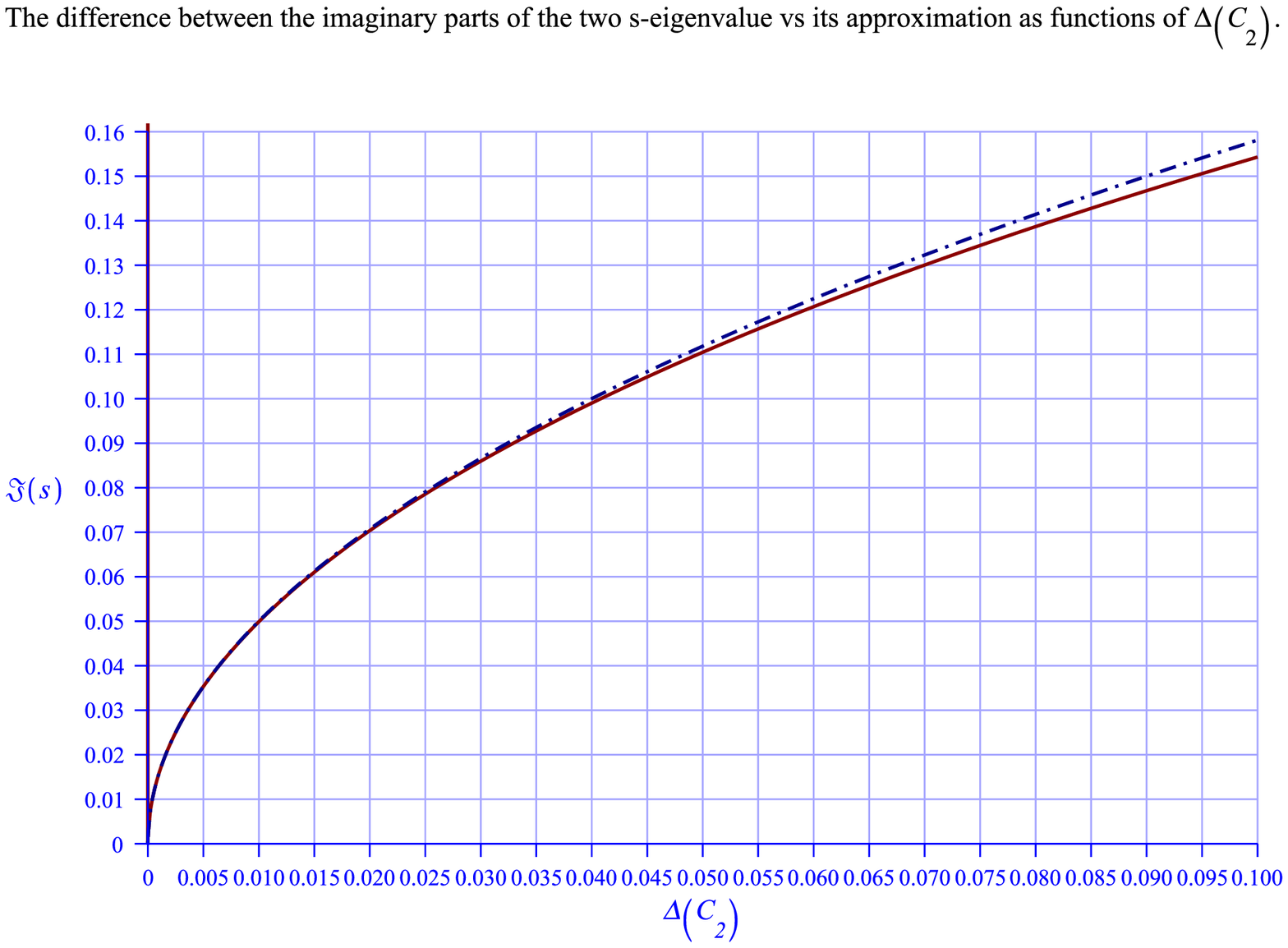}
\par\end{centering}
\centering{}(a)$\hspace{7cm}$(b)\caption{\label{fig:pertC2-2} This plot (a) is fragment for the plot in Figure
\ref{fig:pertC2-1} showing also the imaginary parts of the eigenvalue
approximations (dot-dash line) as in equations (\ref{eq:xixiL3a})
where we assume $\epsilon=1$ and $\Delta\left(R\right)$ to be small.
The plot (b) shows the difference between the imaginary parts of the
eigenvalues as in plot (a) and the corresponding approximation (blue
dash-dot line).}
\end{figure}
\begin{figure}[h]
\centering{}\includegraphics[scale=0.4]{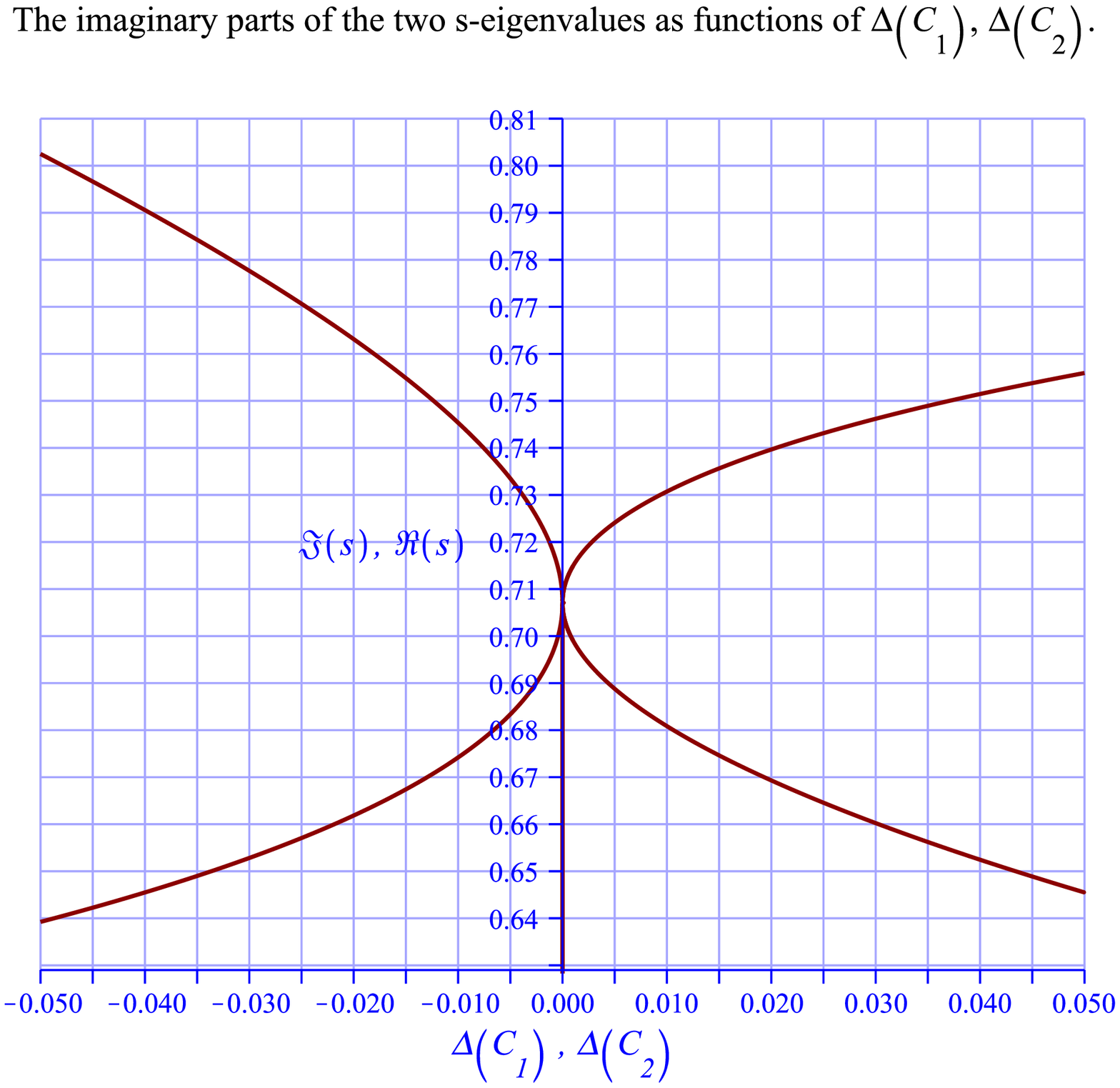}\caption{\label{fig:pertC12} This plot provides for comparative picture of
the variation of the imaginary parts of the eigenvalues shown in Figs.
\ref{fig:pertC1-1} and \ref{fig:pertC2-1} as $\Delta\left(C_{1}\right)$
and $\Delta\left(C_{2}\right)$ vary. The lines on the left and on
the right represent respectively the variation of $s$ as $\Delta\left(C_{1}\right)$
and $\Delta\left(C_{2}\right)$ vary. Notice that the imaginary part
of the eigenvalues vary more with variation of $\Delta\left(C_{1}\right)$
then with variation of $\Delta\left(C_{2}\right)$ in compliance with
equations (\ref{eq:xixiL5a}) and (\ref{eq:xixiL5c}) and Theorem
\ref{thm:omDelR}.}
\end{figure}
One can see in Figure \ref{fig:pertg-1} that there are exactly two
special values of the gyrator parameter $g$, namely $g_{_{-1}}=\frac{1}{2}$
and $g_{_{1}}=\frac{9}{2}$ corresponding to respectively to $\Delta\left(g\right)=0$
and $\Delta\left(g\right)=8$, for which the eigenfrequencies degenerate,
see also Fig. Fig. \ref{fig:eigenvals}.
\begin{figure}[h]
\centering{}\includegraphics[scale=0.4]{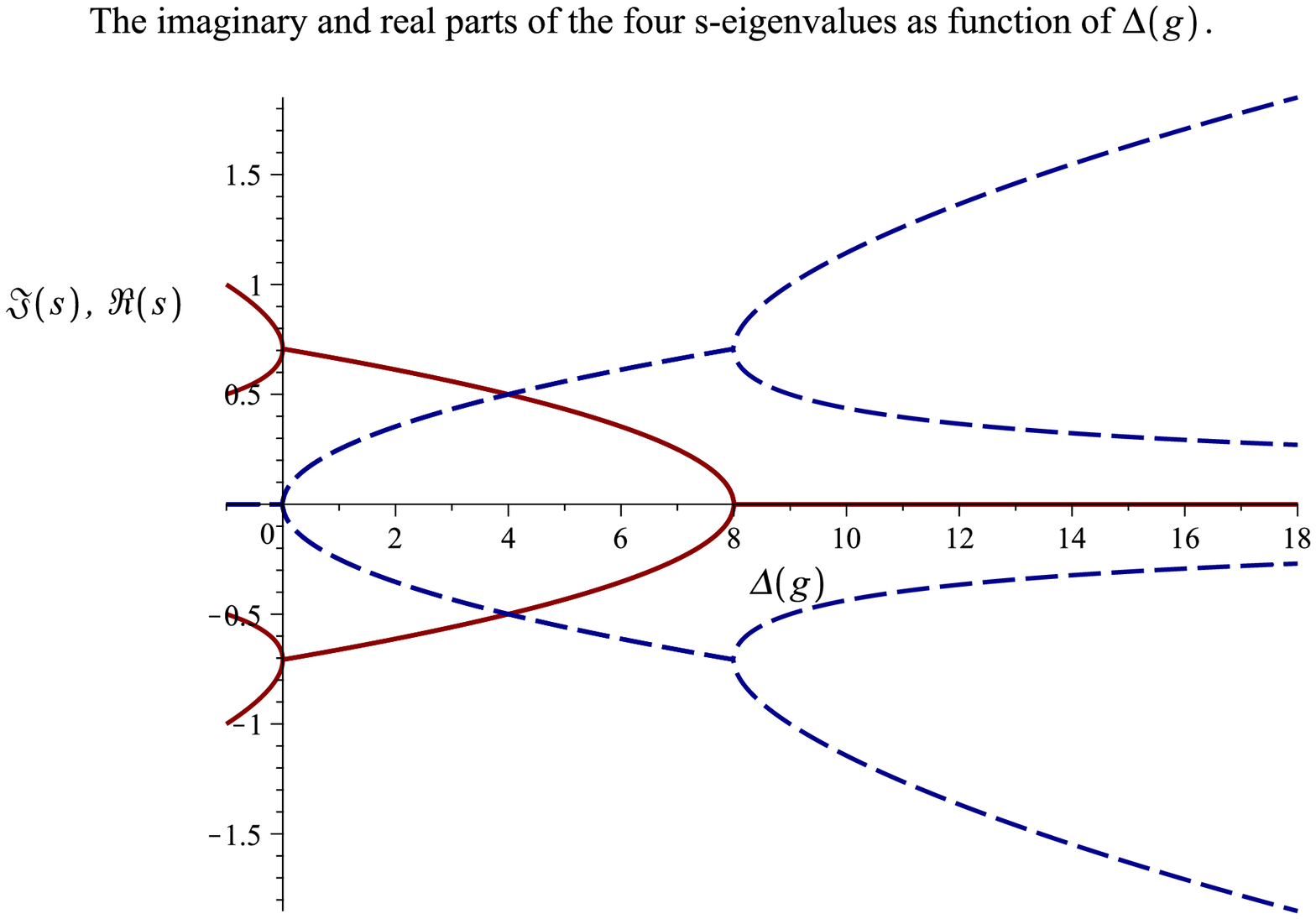}\caption{\label{fig:pertg-1} This plot shows the real (blue dashed curve)
and imaginary (solid brown curve) the real and imaginary parts of
the four eigenvalues $s$ as in equations (\ref{eq:Adqs2ap}) assuming
that (i) the circuit parameters satisfy equations (\ref{eq:xixiL2b})
where $\epsilon=1$, $\Delta\left(L_{1}\right)=\Delta\left(L_{2}\right)=\Delta\left(C_{1}\right)=\Delta\left(C_{2}\right)=0$
and $\Delta\left(g\right)$ varies. In other words, the plot shows
the variation of the real and imaginary parts of the four eigenvalues
$s$ as function on the capacitance $g$.}
\end{figure}
\begin{figure}[h]
\begin{centering}
\includegraphics[scale=0.35]{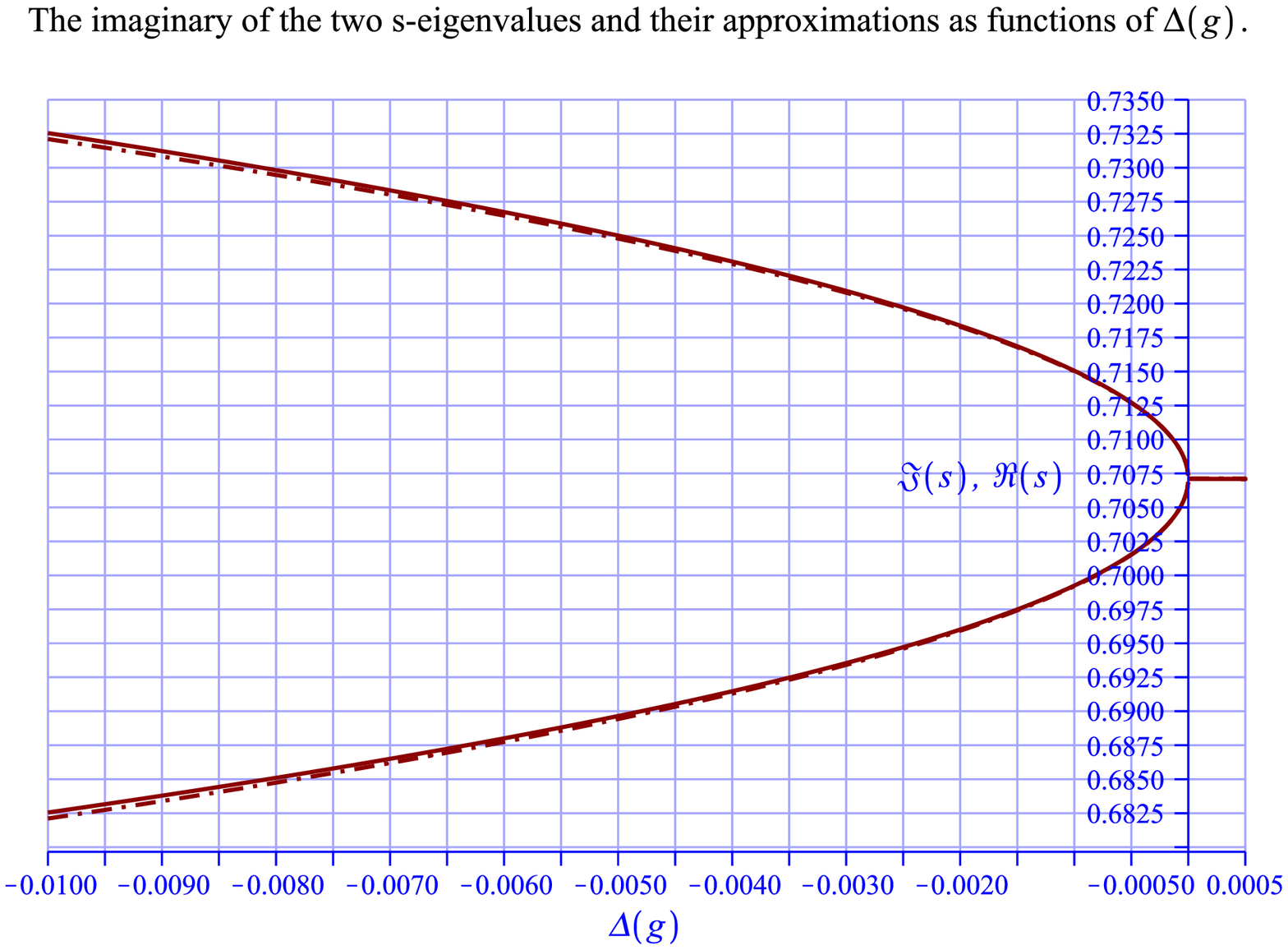}$\hspace{0.5cm}$\includegraphics[scale=0.35]{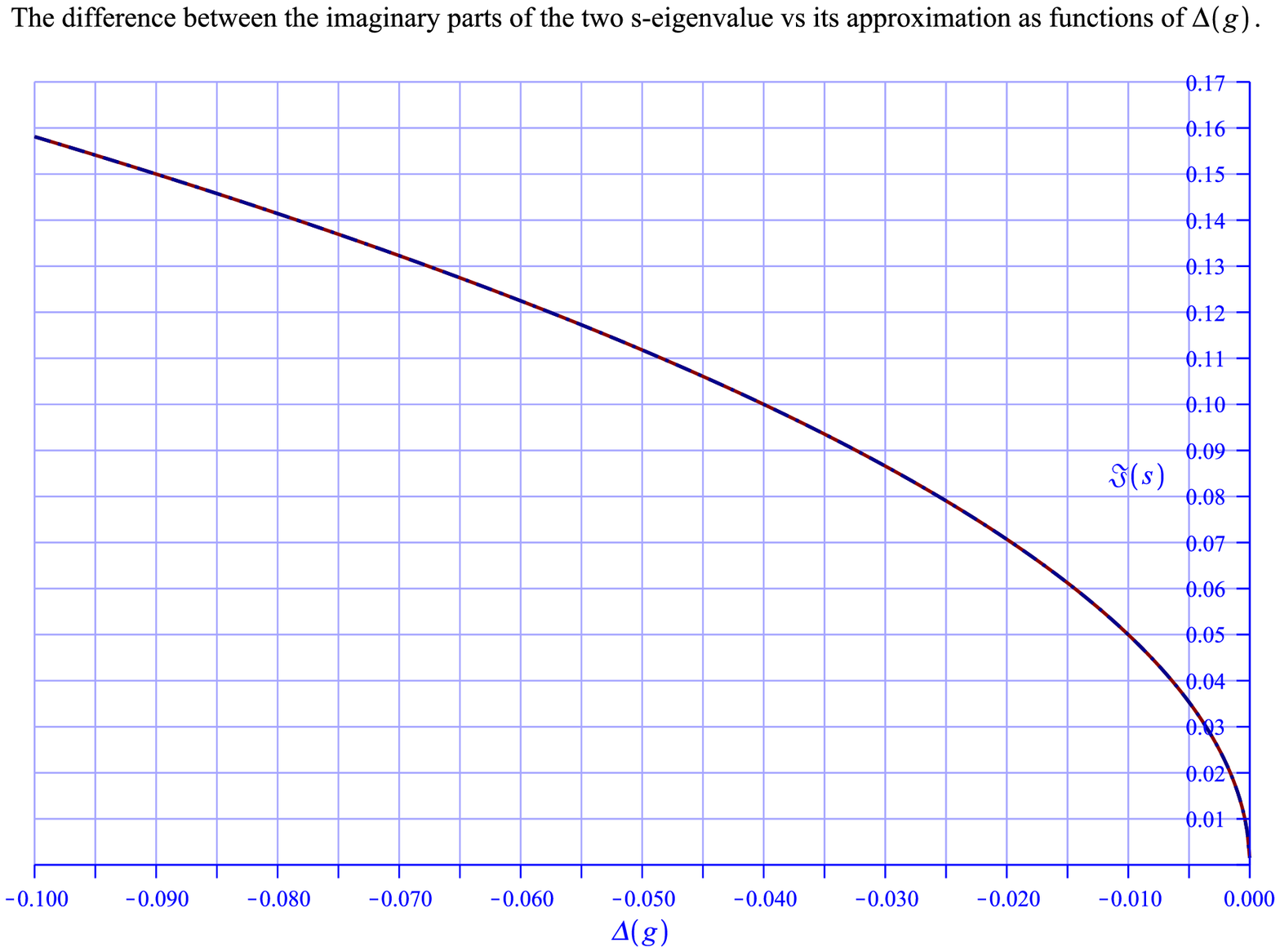}
\par\end{centering}
\centering{}(a)$\hspace{7cm}$(b)\caption{\label{fig:pertg-2} This plot (a) is fragment for the plot in Figure
\ref{fig:pertg-1} showing also the imaginary parts of the eigenvalue
approximations (dot-dash line) as in equations (\ref{eq:xixiL3a})
where we assume $\epsilon=1$ and $\Delta\left(R\right)$ to be small.
The plot (b) shows the difference between the imaginary parts of the
eigenvalues as in plot (a) and the corresponding approximation (blue
dash-dot line). Evidently, the approximation is essentially the same
as the difference between the imaginary parts of the eigenvalues.}
\end{figure}

\section{The circuit stable operation when close to an EPD}

The circuit stable operation is critical to its eigenfrequencies measurements.
Using an EPD for enhanced sensitivity poses an immediate challenge
when it comes to the stability issue. By the very nature of the eigenfrequency
degeneracy the circuit when close to an EPD is only marginally stable
at the best. A clear manifestation of the instability is the existence
of the circuit eigenvalues (eigenfrequencies) $s=\mathrm{i}\omega$
that have non-zero real part with consequent exponential growth in
time when the circuit is in the corresponding eigenstate. Figs. \ref{fig:eigenvals}
and \ref{fig:pertC1-1} provide for graphical illustration demonstrating
the later point. Relations (\ref{eq:xixiL5a}) and (\ref{eq:xixiL5c})
make the point on the instability analytically. Indeed, for the data
as in equations (\ref{eq:xixiL4a}) and (\ref{eq:xixi4Lb}) relations
(\ref{eq:xixiL5a}) imply that if $\frac{C_{1}-\dot{C}_{1}}{\dot{C}_{1}}<0$
then the relative frequency shift $\frac{\omega_{+}-\omega_{-}}{\dot{\omega}}$
is real valued and the circuit is stable. But if $\frac{C_{1}-\dot{C}_{1}}{\dot{C}_{1}}>0$
then the relative frequency shift $\frac{\omega_{+}-\omega_{-}}{\dot{\omega}}$
has non-zero imaginary part and consequently the circuit is unstable.
This analysis suggest a trade-off approach for combining advantages
of employing an EPD point of the circuit for the enhanced sensitivity
and the circuit stability. \emph{This trade-off approach is to use
as the circuit work point its state close to an EPD but not exactly
at it so all the circuit eigenfrequencies are real and consequently
its operation is stable. The close proximity to an EPD point should
be chosen so that the expected circuit perturbations should reliably
maintain the circuit stability}.

To quantify the suggested above trade-off between the circuit enhanced
sensitivity and its stability we proceed as follows. Suppose that
quantity that varies about its EPD value is the capacitance $C_{1}$
with understanding that other circuit parameters variation can be
treated similarly. Having in mind the relative frequency shift $\frac{\omega_{+}-\omega_{-}}{\dot{\omega}}$
defined by equation (\ref{eq:xixiL3ab}) let consider its approximation
(\ref{eq:xixiL5a}) for capacitance $C_{1}$ varying about its EPD
value $\dot{C}_{1}$
\begin{equation}
\varDelta\left(c\right)=\sqrt{-ac}\cong\frac{\omega_{+}-\omega_{-}}{\dot{\omega}},\quad a=\frac{\omega_{1}}{\omega_{2}}-1,\quad c=\Delta\left(C_{1}\right),\quad\left|c\right|\ll1.\label{eq:Delac1a}
\end{equation}
Notice that if $ac>0$ then $\varDelta\left(c\right)$ is pure imaginary
manifesting the circuit instability. Since we want to assure the circuit
stable operation we need to enforce inequality $ac<0$. We achieve
that by partitioning $c$ into two numbers
\begin{equation}
c=w+x,\quad aw<0,\quad\left|x\right|\leq\frac{\left|w\right|}{2},\quad\left|w\right|\ll1.\label{eq:Delac1b}
\end{equation}
Number $w$ in the partitioning (\ref{eq:Delac1b}) is assumed to
be chosen and fixed whereas number $x$ can vary within allowed limit
$\left|x\right|\leq\frac{\left|w\right|}{2}$ implying inequality
$ac<0$ and consequently the stability. We refer to number $w$ as
the \emph{work point}. We introduce also the variation function about
the work point
\begin{equation}
\varDelta_{w}\left(x\right)=\varDelta\left(w+x\right)-\varDelta\left(w\right)=\sqrt{-a\left(w+x\right)}-\sqrt{-aw},\label{eq:Delac1c}
\end{equation}
and the \emph{enhancement factor function}
\begin{equation}
F_{w}\left(x\right)=\left|\frac{\varDelta_{w}\left(x\right)}{x}\right|=\left|\frac{\sqrt{-a\left(w+x\right)}-\sqrt{-aw}}{x}\right|.\label{eq:Delac1d}
\end{equation}
Notice that the following asymptotic formulas holds for variation
function $\varDelta_{w}\left(x\right)$ defined by equation (\ref{eq:Delac1c})
\begin{equation}
\varDelta_{w}\left(x\right)=\frac{\sqrt{-aw}}{2w}x-\frac{\sqrt{-aw}}{8w^{2}}x^{2}+\frac{\sqrt{-aw}}{16w^{3}}x^{3}+O\left(x^{4}\right),\quad x\rightarrow0.\label{eq:Delac1e}
\end{equation}
Asymptotic expression (\ref{eq:Delac1e}) in view of equation (\ref{eq:Delac1b})
readily implies the following representation for the enhancement factor
function $F_{w}\left(x\right)$
\begin{equation}
F_{w}\left(x\right)=\left|\frac{\sqrt{-aw}}{2w}-\frac{\sqrt{-aw}}{8w^{2}}x+\frac{\sqrt{-aw}}{16w^{3}}x^{2}+O\left(x^{3}\right)\right|,\quad x\rightarrow0.\label{eq:Delac1f}
\end{equation}

\begin{rem}[quantifying trade-off]
\label{rem:tradeoff} The expression (\ref{eq:Delac1e}) for enhancement
factor $F_{w}\left(0\right)$ and relations (\ref{eq:Delac1b}) quantify
the trade-off for taking smaller values of $w.$ Indeed smaller $w$
yield larger enhancement factor but the stability condition $\left|x\right|\leq\frac{\left|w\right|}{2}$
requires constraints the allowed variation of $x$ to a smaller interval.
The indicated trade-off allows to identify the smallest $w$ acceptable
for desired range of values for $x$. 
\end{rem}

Figure \ref{fig:enhfac1} shows the values of the enhancement factor
function $F_{w}\left(x\right)$ for different values of $w$ and $a=\frac{\omega_{1}}{\omega_{2}}-1$.

\begin{figure}[h]
\begin{centering}
\includegraphics[scale=0.4]{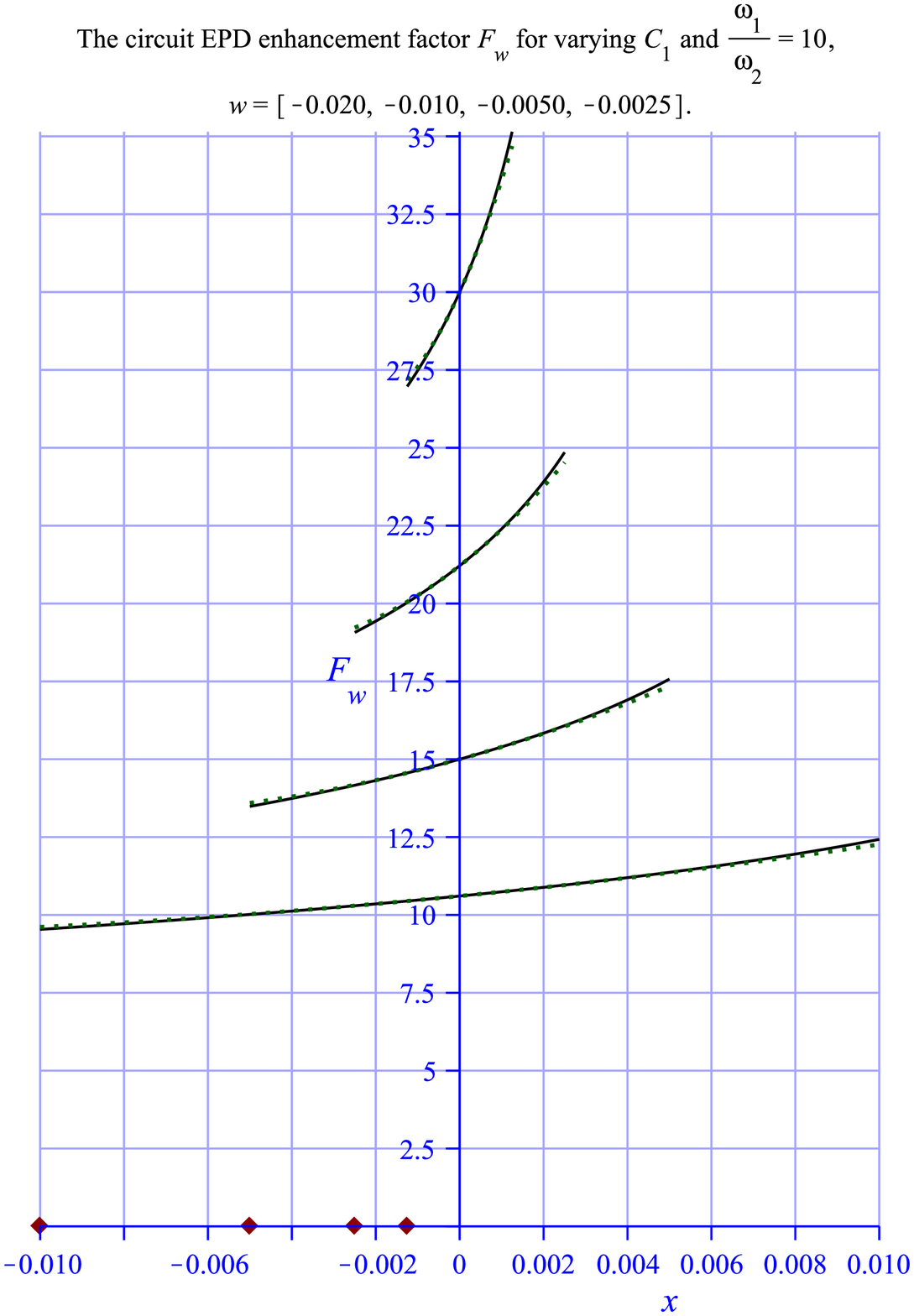}$\hspace{0.5cm}$\includegraphics[scale=0.4]{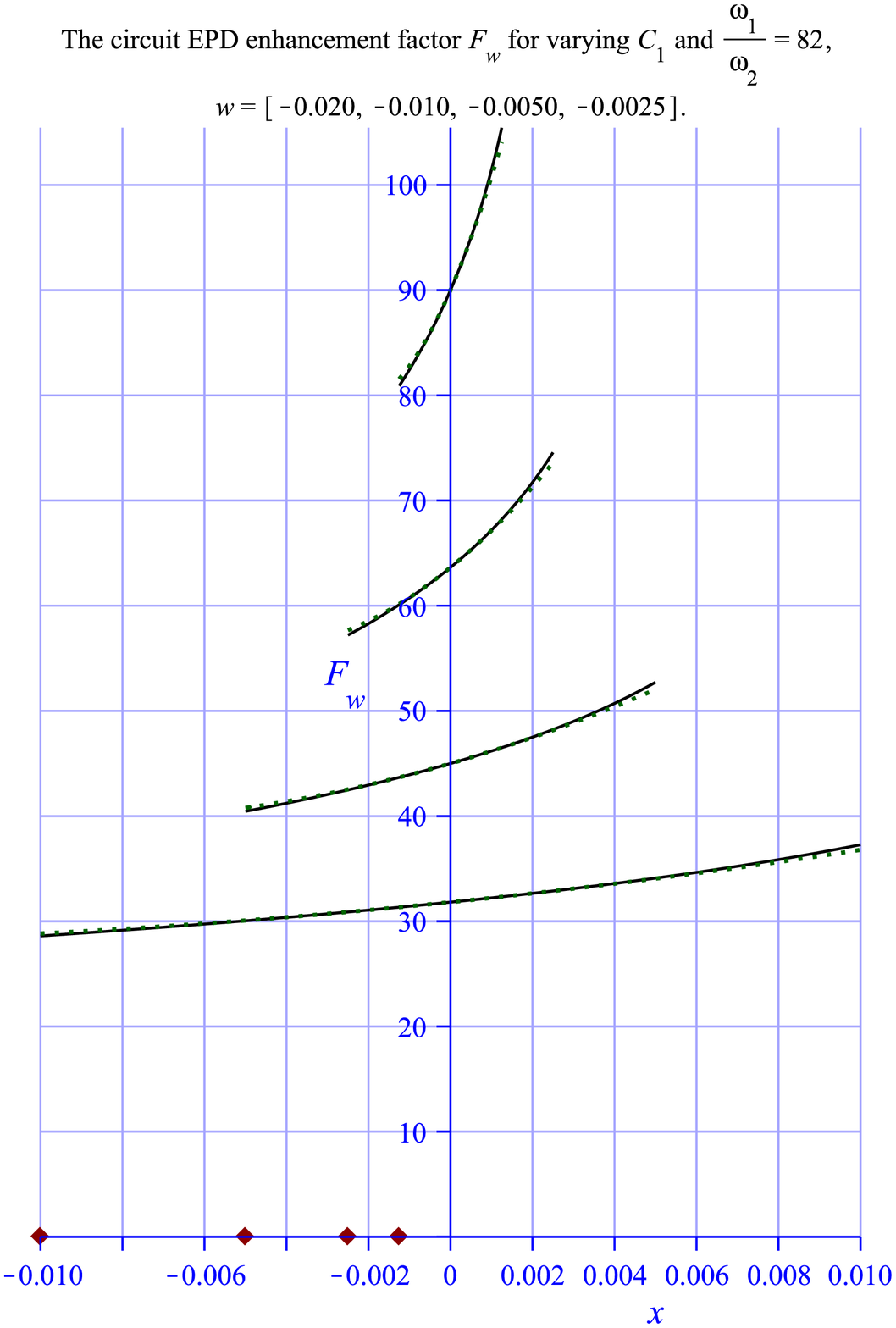}
\par\end{centering}
\centering{}(a)$\hspace{7cm}$(b)\caption{\label{fig:enhfac1} This plot shows the enhancement factor faction
$F_{w}\left(x\right)$ defined in equation (\ref{eq:Delac1e}) as
solid (black) lines and its approximation as in equation (\ref{eq:Delac1f})
as dotted (green) lines for (i) a number of values $w=-0.02,-0.01,-0.005,-0.0025$;
(ii) two values $\frac{\omega_{1}}{\omega_{2}}=10$ for plot (a) and
$\frac{\omega_{1}}{\omega_{2}}=82$ for the plot (b). The variable
$x$ varies in interval $\left[-\left|w\right|,\left|w\right|\right]$
and the corresponding values of $w$ are identified on $x$-axis by
diamond (brown) dots. Smaller values of $w$ yield larger values of
function $F_{w}\left(x\right)$. Notice also that the function $F_{w}\left(x\right)$
approximations fit extremely well its values over indicated intervals.}
\end{figure}

\section{A few points on sensitivity of measurements}

Since we pursue here enhanced sensitivity and higher resolution measurements
using the circuits described in previous sections we provide a brief
review of sensors and their capabilities based on \cite[5.3]{Bish}
and references there. Microsensors seems to be the class of sensors
particular suited for enhanced sensitivity and higher resolution measurements.
According to \cite[5.3]{Bish} they are typically based on either
measurement of mechanical strain, measurement of mechanical displacement,
or on frequency measurement of a structural resonance. Displacement
measurement are often based on the capacitance readout \cite{BoxGre},
\cite{Brook}, \cite{LeuRud}, \cite{Munt}, \cite{Seid}. Resonant-type
sensors measure frequency and they are generally less susceptible
to noise and thus typically provide a higher resolution measurement.
These type of sensors provide as much as one hundred times the resolution
of analog sensors, though they and more complex and are typically
more difficult to fabricate. Using proposed here circuits can enhance
the analog sensor sensitivity and resolution by more 100 times making
them a competitive alternative to more complex resonant-type sensors.

The reported accuracy of sensors based on (differential) capacitance
is 500 ppm (0.05\%) with the smallest change in sensed capacitance
being about $20\,\mathrm{aF}$, \cite{BoxGre}. As to capacitance
tolerances the tolerance values between 1\% and 20\% are common, and
precision capacitor tolerances range from 0.1\% to 0.5\%, \cite[8.2.3]{WhitH},
\cite[2.6.7]{WhitE}.

Using frequencies for assessing the circuit related quantities is
justified by the fact that time interval and frequency can be measured
with less uncertainty and more resolution than any other physical
quantity. The accuracy of the low cost frequency counters can be $1\times10^{-7}$
in 1 s, \cite[10]{Bish}, \cite[27.8]{Boye}. The current limit for
TIC resolution is about 20 ps, which means that a frequency change
of $2\times10^{-11}$ can be detected in 1 s \cite[10]{Bish}. Today,
the best time and frequency standards can realize the SI second with
uncertainties of $\cong1\times10^{-15}$. 

We remind also that according to the basics of measurement theory
the true value of a measured quantity is known only in the case of
calibration of measurement instruments \cite[1.5]{Rab}. In this case,
the true value is the value of the measurement standard used in the
calibration, whose inaccuracy must be negligible compared with the
inaccuracy of the measurement instrument being calibrated. If $x$
is true value of measured quantity and $\tilde{x}$ is the result
of measurement then the absolute error $\varDelta x=\tilde{x}-x$,
and the relative error is $\delta x=\frac{\tilde{x}-x}{x}$, \cite[1.5]{Rab}.

When equations (\ref{eq:Delac1d}) and (\ref{eq:Delac1f}) define
the enhancement function $F_{w}\left(x\right)$ and Figure \ref{fig:enhfac1}
shows its values for a range of values of $x$ we would like to introduce
a simpler enhancement characterization as its value at the work point,
that is for $x=0$:
\begin{equation}
F_{w}\left(0\right)=\frac{\sqrt{\left|a\right|}}{2\sqrt{\left|w\right|}}=\frac{\sqrt{\left|\frac{\omega_{1}}{\omega_{2}}-1\right|}}{2\sqrt{\left|w\right|}}.\label{eq:Delac1g}
\end{equation}
It turns out that quantity $F_{w}\left(0\right)$ is relevant to the
resolution of measurements at the work point. Indeed, equation (\ref{eq:Delac1a})
implies that the resolution $\rho_{x}$ for $x$ at $x=0$, that is
at the work point $w$, is related to the resolution $\rho_{f}$ for
the relative frequency $\frac{\omega_{+}-\omega_{-}}{\dot{\omega}}$
by the following equation
\begin{equation}
\rho_{f}=\left|\partial_{c}\varDelta\left(w\right)\right|\rho_{x}.\label{eq:Delac2b}
\end{equation}
In view of equations (\ref{eq:Delac1a}), (\ref{eq:Delac1c}), (\ref{eq:Delac1f})
and (\ref{eq:Delac1g}) the factor $\left|\partial_{c}\varDelta\left(w\right)\right|$
in the right-hand side of equation (\ref{eq:Delac2b}) satisfies
\begin{equation}
\left|\partial_{c}\varDelta\left(w\right)\right|=\left|\partial_{x}\varDelta_{w}\left(0\right)\right|=\left|\frac{\sqrt{-aw}}{2w}\right|=F_{w}\left(0\right)=\frac{\sqrt{\left|\frac{\omega_{1}}{\omega_{2}}-1\right|}}{2\sqrt{\left|w\right|}}.\label{eq:Delac2c}
\end{equation}
Substitution of expression (\ref{eq:Delac2b}) for $\left|\partial_{c}\varDelta\left(w\right)\right|$
into equation equation (\ref{eq:Delac2b}) yields
\begin{equation}
\rho_{x}=\frac{\rho_{f}}{F_{w}\left(0\right)}=\frac{2\sqrt{\left|w\right|}}{\sqrt{\left|\frac{\omega_{1}}{\omega_{2}}-1\right|}}\rho_{f}.\label{eq:Delac2d}
\end{equation}
Figure \ref{fig:enhfac2} illustrates the dependence of the enhancement
factor $F_{w}\left(0\right)$ on $w$ and the circuit parameters.
\begin{figure}[h]
\begin{centering}
\includegraphics[scale=0.4]{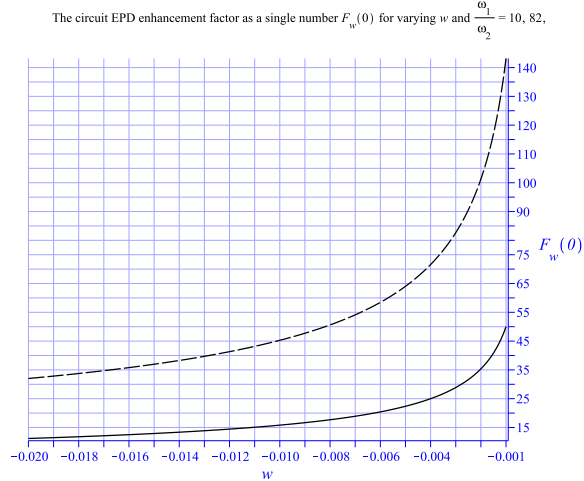}
\par\end{centering}
\centering{}\caption{\label{fig:enhfac2} This plot shows the enhancement factor $F_{w}\left(0\right)$
defined in equation (\ref{eq:Delac1g}) for a range of values of $w$
and for (a) $\frac{\omega_{1}}{\omega_{2}}=10$ as solid line; (b)
$\frac{\omega_{1}}{\omega_{2}}=82$ as dashed line. The shown values
of $F_{w}\left(0\right)$ together with formula (\ref{eq:Delac2d})
relating resolutions $\rho_{x}$ and $\rho_{f}$ shows the circuit
can enhance the resolution $\rho_{x}$ by one or two orders of magnitude.}
\end{figure}

\section{Jordan blocks viewed through their perturbations\label{sec:Jordper}}

The determination by numerical evaluations of the presence of nontrivial
Jordan blocks in the Jordan canonical form of a matrix is a rather
complicated matter. The reason for the problem is that an infinitesimally
small perturbation of the matrix can make it completely diagonalizable.
An explanation to this seemingly paradoxical situation is that though
the perturbed matrix is diagonal some of its relevant eigenvectors
are nearly parallel indicating the perturbed matrix proximity to a
matrix with nontrivial (non-diagonal) Jordan canonical form. The explanation
suggests that a numerically sound determination of the presence of
nontrivial Jordan blocks in the Jordan form of a matrix can be based
on the matrix perturbations and we pursue this idea here.

The perturbation theory of the matrix Jordan structure in the case
of generic perturbations is known as \emph{Lidskii's theory} \cite{Lids},
\cite{MoBuOv}, \cite[7.4]{Baum}, \cite[2.5]{SeyMai}, \cite[4.4.2]{KirO}.
Lidskii's theory allows to determine effectively the first significant
perturbation for the eigenvalues and the Jordan basis vectors. Under
additional conditions the theory yields the power series representations
for the eigenvalues and the Jordan basis vectors. In the case of nonderogatory
(geometric multiplicity one) eigenvalue and a generic perturbation
there is constructive recursive algorithm for computing all coefficients
for these power series due to Welters \cite{Welt}.

Our goal here to develop based on some concepts and results of Lidskii's
theory constructive approaches for detecting nontrivial Jordan blocks
in the Jordan canonical form of a matrix by their manifestation in
the matrix generic perturbations.

Let $\mathbb{C}^{n\times n}$ be the set of $n\times n$ matrices
with complex-valued coefficients and let matrix $A\in\mathbb{C}^{n\times n}$.
Consider now a perturbation $A\left(\varepsilon\right)$ of matrix
$A$ of the following form

\begin{equation}
A\left(\varepsilon\right)=A+B\left(\varepsilon\right),\quad B\left(\varepsilon\right)=B_{1}\varepsilon+\sum_{k=2}^{\infty}B_{k}\varepsilon^{k},\quad A,\;B_{k}\in\mathbb{C}^{n\times n},\label{eq:ABeps1a}
\end{equation}
where $\varepsilon$ is a small parameter and the power series is
assumed to converge. Suppose now that $\lambda_{0}$ is a nonderogatory
eigenvalue of the algebraic multiplicity $m$, that is its geometric
multiplicity is exactly $1$. Suppose $\mathscr{J}$ be the Jordan
form of matrix $A$ and $Z$ be $n\times n$ matrix the columns of
which form the Jordan basis matrix $A$. Then
\begin{equation}
\mathscr{J}=Z^{-1}AZ=\left[\begin{array}{rr}
J_{m}\left(\lambda_{0}\right) & 0\\
0 & J
\end{array}\right],\quad J_{m}\left(\zeta\right)=\left[\begin{array}{ccccc}
\zeta & 1 & \cdots & 0 & 0\\
0 & \zeta & 1 & \cdots & 0\\
0 & 0 & \ddots & \cdots & \vdots\\
\vdots & \vdots & \ddots & \zeta & 1\\
0 & 0 & \cdots & 0 & \zeta
\end{array}\right],\label{eq:AJzeta1a}
\end{equation}
where $J$ is the part of the Jordan form $\mathscr{J}$ related to
the eigenvalues of $A$ different than $\lambda_{0}$ and $J_{m}\left(\zeta\right)$
is the Jordan block of dimension $m$ associated with eigenvalue $\zeta$.
Using equations (\ref{eq:AJzeta1a}) we introduce the following matrices
represented in block form

\begin{equation}
Z^{-1}AZ=\left[\begin{array}{rr}
J_{m}\left(\lambda_{0}\right) & 0\\
0 & J
\end{array}\right],\quad Z^{-1}B_{1}Z=\left[\begin{array}{rr}
\tilde{B}_{1} & \ast\\
\ast & \ast
\end{array}\right],\label{eq:AJzeta1b}
\end{equation}
where symbol ``$\ast$'' signifies matrix blocks of relevant dimensions.
Let us consider now entries of block $\tilde{B}_{1}\in\mathbb{C}^{m\times m}$
\begin{equation}
\tilde{B}_{1}=\left[\begin{array}{rrrrr}
\ast & \ast & \cdots & \ast & \ast\\
\vdots & \ast & \ast & \cdots & \ast\\
\ast & \vdots & \ddots & \cdots & \vdots\\
b_{m-1,1} & \ast & \ddots & \ast & \ast\\
b_{m,1} & b_{m,2} & \cdots & \ast & \ast
\end{array}\right],\label{eq:AJzeta1c}
\end{equation}
 where entries marked by ``$\ast$'' are of no particular significance.

Let $\mathbb{C}^{n}$ be a set of column-vectors with complex-valued
entries and let $e_{j},\:1\leq j\leq n$ be its standard basis. Our
first statement will be on the geometric multiplicity $\mathrm{gmul}{}_{A}\,\left(\lambda_{0}\right)$
of an eigenvalue $\lambda_{0}$ of matrix $A$. We remind that the
geometric multiplicity $\mathrm{gmul}{}_{A}\,\left(\lambda_{0}\right)$
is defined as the dimension of the eigenspace of matrix $A$ corresponding
to the eigenvalue $\lambda_{0}$. The statement holds under certain
genericity assumption and to formulate it we proceed as follows.

Let $V$ be the eigenspace of matrix $A$ corresponding to the eigenvalue
$\lambda_{0}$ and $W$ be the eigenspace of matrix $A^{\ast}$, which
is adjoint to $A$, corresponding to the eigenvalue $\bar{\lambda}_{0}$
which is complex-conjugate to $\lambda_{0}$.
\begin{equation}
\dim V=\dim W=\mathrm{gmul}{}_{A}\,\left(\lambda_{0}\right)=\mathrm{gmul}{}_{A^{\ast}}\,\left(\lambda_{0}^{\ast}\right)=r.\label{eq:ABeps1b}
\end{equation}
Let 
\begin{equation}
A_{0}=A-\lambda_{0}\mathbb{I},\quad U=\left[u_{1},u_{2},\ldots,u_{m},u_{m+1},\ldots,u_{n}\right],\label{eq:AuU1a}
\end{equation}
where matrix $U\in\mathbb{C}^{n\times n}$ is an orthogonal matrix
formed by its column-vectors
\begin{equation}
u_{j}=Ue_{j}\in\mathbb{C}^{n},\quad1\leq j\leq n.\label{eq:AuU1aa}
\end{equation}
We suppose now that the first $r$ column-vectors $u_{j}$ are chosen
to form an orthonormal basis of the vector space $V$ and consequently
they are eigenvectors of matrix $A$ corresponding to the eigenvalue
$\lambda_{0}$, that is
\begin{equation}
A_{0}u_{j}=\left(A-\lambda_{0}\mathbb{I}\right)u_{j}=0,\quad1\leq j\leq r.\label{eq:AuU1b}
\end{equation}
Notice that the general identity $r+\mathrm{rank}\,\left\{ A\right\} =n$
implies that
\begin{equation}
A_{0}u_{j},\quad r+1\leq j\leq n,\text{ are linearly independent.}\label{eq:AuU1c}
\end{equation}
Let matrix $S\in\mathbb{C}^{n\times n}$ be an invertible matrix such
that
\begin{equation}
SA_{0}u_{j}=e_{j},\quad r+1\leq j\leq n,\label{eq:AuU1d}
\end{equation}
 the existence of which is warranted by relations (\ref{eq:AuU1c}).
In view of (\ref{eq:AuU1aa}), (\ref{eq:AuU1b}) and (\ref{eq:AuU1d})
we have
\begin{equation}
SA_{0}Ue_{j}=0,\quad1\leq j\leq n;\quad SA_{0}Ue_{j}=e_{j}\quad r+1\leq j\leq n,\label{eq:AuU1e}
\end{equation}
 and consequently matrix $SA_{0}U\in\mathbb{C}^{n\times n}$ has the
following block form
\begin{equation}
SA_{0}U=\left[\begin{array}{rr}
0 & 0\\
0 & \mathbb{I}
\end{array}\right],\quad A_{0}=A-\lambda_{0}\mathbb{I},\label{eq:AuU1f}
\end{equation}
where upper diagonal block matrix $0\in\mathbb{C}^{r\times r}$ ,
$\mathbb{I}\in\mathbb{C}^{\left(n-r\right)\times\left(n-r\right)}$
and off-diagonal block matrices are $0$ matrices of the relevant
dimensions.
\begin{thm}[geometric multiplicity test]
 Suppose that $\lambda_{0}$ is an eigenvalue of matrix $A\in\mathbb{C}^{n\times n}$
of the geometric multiplicity
\begin{equation}
r=\mathrm{gmul}{}_{A}\,\left(\lambda_{0}\right).\label{eq:detAB1aa}
\end{equation}
Let matrix $B\left(\varepsilon\right)\in\mathbb{C}^{n\times n}$ be
represented by converging power series
\begin{equation}
B\left(\varepsilon\right)=B_{1}\varepsilon+\sum_{k=2}^{\infty}B_{k}\varepsilon^{k},\quad B_{k}\in\mathbb{C}^{n\times n},\quad1\leq k\leq n.\label{eq:detAB1ab}
\end{equation}
Then
\begin{equation}
\det\left\{ A+B\left(\varepsilon\right)-\lambda_{0}\mathbb{I}\right\} =O\left(\varepsilon^{r}\right).\label{eq:detAB1a}
\end{equation}
Let matrices $U$ and $S$ be as the described before the statement
of this theorem. Suppose that matrix $SB_{1}U$ is partitioned conformally
with matrix $SA_{0}U$ in (\ref{eq:AuU1f}) and is of the following
block form
\begin{equation}
SB_{1}U=\left[\begin{array}{cr}
b_{1} & *\\
* & *
\end{array}\right],\label{eq:detAB1b}
\end{equation}
where $b_{1}\in\mathbb{C}^{r\times r}$. Assume that matrix $SB_{1}U$
satisfies the genericity condition in the form
\begin{equation}
\det\left\{ b_{1}\right\} \neq0.\label{eq:detAB1c}
\end{equation}
Then
\begin{equation}
\det\left\{ A+B\left(\varepsilon\right)-\lambda_{0}\mathbb{I}\right\} =d\varepsilon^{r}+O\left(\varepsilon^{r+1}\right),\quad d=\frac{\det\left\{ b_{1}\right\} }{\det\left\{ S\right\} }\neq0,\quad\varepsilon\rightarrow0.\label{eq:detAB1d}
\end{equation}
\end{thm}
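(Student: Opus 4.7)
The plan is to exploit the block decomposition already provided by matrices $S$ and $U$: applying $S$ on the left and $U$ on the right turns $A-\lambda_{0}\mathbb{I}$ into the very clean block matrix in (\ref{eq:AuU1f}), after which the entire determinant computation reduces to a Schur-complement expansion around the invertible lower-right block.

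First I would use the multiplicativity of the determinant to write
\begin{equation}
\det\left\{ A+B(\varepsilon)-\lambda_{0}\mathbb{I}\right\} =\frac{\det\left\{ S\bigl(A_{0}+B(\varepsilon)\bigr)U\right\} }{\det\left\{ S\right\} \det\left\{ U\right\} },\quad A_{0}=A-\lambda_{0}\mathbb{I},
\end{equation}
and combine it with (\ref{eq:AuU1f}) and (\ref{eq:detAB1b}) to obtain the block expression
\begin{equation}
S\bigl(A_{0}+B(\varepsilon)\bigr)U=\left[\begin{array}{cc}
\varepsilon b_{1}+O(\varepsilon^{2}) & \varepsilon P_{12}+O(\varepsilon^{2})\\
\varepsilon P_{21}+O(\varepsilon^{2}) & \mathbb{I}+\varepsilon P_{22}+O(\varepsilon^{2})
\end{array}\right],
\end{equation}
where the upper-left block is $r\times r$ and matches the genericity matrix $b_{1}$.

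Next I would invoke the Schur complement formula. For $\varepsilon$ small enough the lower-right block $T(\varepsilon)=\mathbb{I}+O(\varepsilon)$ is invertible with $\det T(\varepsilon)=1+O(\varepsilon)$, so
\begin{equation}
\det\left\{ S\bigl(A_{0}+B(\varepsilon)\bigr)U\right\} =\det T(\varepsilon)\cdot\det\left\{ \varepsilon b_{1}+O(\varepsilon^{2})-Q(\varepsilon)T(\varepsilon)^{-1}R(\varepsilon)\right\} ,
\end{equation}
where $Q(\varepsilon),R(\varepsilon)$ are the off-diagonal blocks. Because both $Q$ and $R$ are already $O(\varepsilon)$, the correction $QT^{-1}R$ is $O(\varepsilon^{2})$; consequently the Schur complement equals $\varepsilon b_{1}+O(\varepsilon^{2})$, an $r\times r$ matrix whose determinant expands as
\begin{equation}
\det\left\{ \varepsilon b_{1}+O(\varepsilon^{2})\right\} =\varepsilon^{r}\det\left\{ b_{1}\right\} +O(\varepsilon^{r+1}).
\end{equation}
This immediately yields (\ref{eq:detAB1a}) with no assumption on $b_{1}$, since an $r\times r$ matrix of order $O(\varepsilon)$ has determinant of order $O(\varepsilon^{r})$. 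Under the genericity hypothesis (\ref{eq:detAB1c}) the leading coefficient $\det\{b_{1}\}$ is nonzero, producing (\ref{eq:detAB1d}) after dividing by $\det\{S\}\det\{U\}$; with $U$ having orthonormal columns (so $\det U$ is a unit of modulus one, normalized to $1$ in the convention of (\ref{eq:AuU1a})) one obtains exactly $d=\det\{b_{1}\}/\det\{S\}$.

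The only nonroutine step is verifying that the Schur-complement expansion is legitimate uniformly as $\varepsilon\to 0$; this is safe because the dominant behavior of the lower-right block comes from the unperturbed identity in (\ref{eq:AuU1f}) rather than from $B(\varepsilon)$, so its invertibility and the bound on $T(\varepsilon)^{-1}$ are automatic for $|\varepsilon|$ small. Everything else is algebra organized around the identity (\ref{eq:AuU1f}), which already encodes the geometric-multiplicity structure of $A$ at $\lambda_{0}$; the genericity condition (\ref{eq:detAB1c}) is simply the nondegenerate pairing between the first $r$ columns of $U$ (the eigenspace $V$) and the first $r$ rows of $S$ (which, by construction, span the cokernel of $A_{0}$), and it enters only in preventing cancellation of the coefficient of $\varepsilon^{r}$.
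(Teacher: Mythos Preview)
Your proof is correct and follows essentially the same approach as the paper's: conjugate $A_{0}+B(\varepsilon)$ by $S$ and $U$ to reach the block form (\ref{eq:detAB1e}), then apply the block-determinant (Schur complement) identity of Proposition~\ref{prop:4blockA1} and divide out $\det S$ and $\det U$. The only cosmetic difference is that you take the Schur complement with respect to the lower-right block $\mathbb{I}+O(\varepsilon)$, which stays invertible at $\varepsilon=0$, whereas the paper invokes formula (\ref{eq:Block1f}) as stated (pivoting on the upper-left block); your choice is marginally cleaner, but the argument is the same.
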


\begin{proof}
Using block s (\ref{eq:AuU1a}).representation (\ref{eq:detAB1b})
and equation (\ref{eq:AuU1f}) we obtain
\begin{equation}
S\left(A+B\left(\varepsilon\right)-\lambda_{0}\mathbb{I}\right)U=S\left(A_{0}+B\left(\varepsilon\right)\right)U=\left[\begin{array}{rr}
b_{1}\varepsilon+O\left(\varepsilon^{2}\right) & O\left(\varepsilon\right)\\
O\left(\varepsilon\right) & \mathbb{I}+O\left(\varepsilon\right)
\end{array}\right],\quad\varepsilon\rightarrow0.\label{eq:detAB1e}
\end{equation}
Then applying equation (\ref{eq:Block1f}) of Proposition \ref{prop:4blockA1}
to the block matrix in the right-hand side of equations (\ref{eq:detAB1e})
\begin{equation}
\det\left\{ S\left(A+B\left(\varepsilon\right)-\lambda_{0}\mathbb{I}\right)U\right\} =\varepsilon^{r}\det\left\{ b_{1}\right\} +O\left(\varepsilon^{r+1}\right),\quad\varepsilon\rightarrow0.\label{eq:detAB1f}
\end{equation}
Since $U$ is an orthogonal matrix $\det U=1$ equation (\ref{eq:detAB1f})
readily implies relations (\ref{eq:detAB1a}) and (\ref{eq:detAB1d}).
\end{proof}
The presence of a nontrivial Jordan block in the Jordan form of a
matrix $A$ can be detected in a number of ways based on its perturbation
$A\left(\varepsilon\right)=A+B\left(\varepsilon\right)$. The following
statement due to Welters, \cite{Welt}, establishes the equivalency
of some of those ways.
\begin{prop}[perturbation of nonderogatory eigenvalue]
\label{prop:nonder} Let $A\left(\varepsilon\right)$ be $\mathbb{C}^{n\times n}$
matrix-valued analytic function of $\varepsilon$ for small $\varepsilon$.
Let $\lambda_{0}$ be an eigenvalue of the unperturbed matrix $A\left(0\right)$
of the algebraic multiplicity $m$. Then the following statements
are equivalent:
\begin{enumerate}
\item The characteristic polynomial $\det\left\{ \lambda\mathbb{I}-A\left(\varepsilon\right)\right\} $
has a simple zero with respect to $\varepsilon$ at $\lambda=\lambda_{0}$
and $\varepsilon=0$, that is
\begin{equation}
\frac{\partial}{\partial\varepsilon}\left.\det\left\{ \lambda\mathbb{I}-A\left(\varepsilon\right)\right\} \right|_{\left(\varepsilon,\lambda\right)=\left(0,\lambda_{0}\right)}\neq0.\label{eq:wel1a}
\end{equation}
\item The characteristic equation $\det\left\{ \lambda\mathbb{I}-A\left(\varepsilon\right)\right\} =0$
has a unique solution, $\varepsilon\left(\lambda\right)$, in a neighborhood
of $\lambda=\lambda_{0}$ with $\varepsilon\left(\lambda_{0}\right)=0$.
This solution is an analytic function with a zero of order $m$ at
$\lambda=\lambda_{0}$, i.e.,
\begin{equation}
\varepsilon\left(\lambda_{0}\right)=\left.\frac{d\varepsilon\left(\lambda\right)}{d\lambda}\right|_{\lambda=\lambda_{0}}=\cdots=\left.\frac{d^{m-1}\varepsilon\left(\lambda\right)}{d\lambda^{m-1}}\right|_{\lambda=\lambda_{0}}=0,\quad\left.\frac{d^{m}\varepsilon\left(\lambda\right)}{d\lambda^{m}}\right|_{\lambda=\lambda_{0}}\neq0.\label{eq:wel1b}
\end{equation}
\item There exists a convergent Puiseux series whose branches are given
by
\begin{equation}
\lambda_{h}\left(\varepsilon\right)=\lambda_{0}+\alpha_{1}\zeta^{h}\varepsilon^{\frac{1}{m}}+\sum_{k=2}^{\infty}\alpha_{k}\left(\zeta^{h}\varepsilon^{\frac{1}{m}}\right)^{k},\quad h=0,\ldots,m-1,\quad\zeta=\mathrm{e}^{\frac{2\pi}{m}\mathrm{i}},\label{eq:wel1c}
\end{equation}
for any fixed branch of $\varepsilon^{\frac{1}{m}}$, and the first
order term is nonzero, i.e.,
\begin{equation}
\alpha_{1}\neq0.\label{eq:wel1d}
\end{equation}
The values of the branches $\lambda_{h}\left(\varepsilon\right)$
give all the solutions of the characteristic equation for sufficiently
small $\varepsilon$ and $\lambda$ sufficiently near $\lambda_{0}$.
\item The Jordan normal form of $A\left(0\right)$ corresponding to the
eigenvalue $\lambda_{0}$ consists of a single $m\times m$ Jordan
block, and there exists an eigenvector $u_{0}$ of $A\left(0\right)$
corresponding to the eigenvalue $\lambda_{0}$ and an eigenvector
$v_{0}$ of matrix $A^{\ast}\left(0\right)$ (adjoint to $A\left(0\right)$
) corresponding to the eigenvalue $\bar{\lambda}_{0}$ such that
\begin{equation}
\left(v_{0},B_{1}u_{0}\right)\neq0,\quad B_{1}=\left.\frac{dA\left(\varepsilon\right)}{\partial\varepsilon}\right|_{\varepsilon=0}.\label{eq:wel1e}
\end{equation}
\end{enumerate}
\end{prop}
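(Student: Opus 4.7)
The strategy is to prove the cycle (i) $\Rightarrow$ (ii) $\Rightarrow$ (iii) $\Rightarrow$ (iv) $\Rightarrow$ (i), letting $P(\lambda,\varepsilon)=\det\left\{ \lambda\mathbb{I}-A\left(\varepsilon\right)\right\} $, which is jointly analytic in $(\lambda,\varepsilon)$, with unperturbed factorization $P(\lambda,0)=(\lambda-\lambda_{0})^{m}q(\lambda)$ and $q(\lambda_{0})\neq0$.

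For (i) $\Rightarrow$ (ii), I would apply the Weierstrass preparation theorem to $P$ viewed as an analytic function of $\varepsilon$ with parameter $\lambda$ near $\lambda_{0}$. The hypothesis (\ref{eq:wel1a}) means $P$ is a Weierstrass element of order $1$ in $\varepsilon$ at $(\lambda_{0},0)$, yielding a factorization $P(\lambda,\varepsilon)=(\varepsilon-\varepsilon(\lambda))U(\lambda,\varepsilon)$ with $U(\lambda_{0},0)\neq0$ and $\varepsilon(\lambda)$ analytic with $\varepsilon(\lambda_{0})=0$. Setting $\varepsilon=0$ gives $P(\lambda,0)=-\varepsilon(\lambda)U(\lambda,0)$, so since $U(\lambda,0)$ is a unit near $\lambda_{0}$ while $P(\lambda,0)$ vanishes to order exactly $m$, the analytic solution $\varepsilon(\lambda)$ inherits the same order of vanishing, establishing (\ref{eq:wel1b}). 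For (ii) $\Rightarrow$ (iii), I would write $\varepsilon(\lambda)=(\lambda-\lambda_{0})^{m}\phi(\lambda)$ with $\phi(\lambda_{0})\neq0$, take an analytic $m$-th root $\psi(\lambda)=\phi(\lambda)^{1/m}$, and observe that $\eta(\lambda):=(\lambda-\lambda_{0})\psi(\lambda)$ satisfies $\eta'(\lambda_{0})=\psi(\lambda_{0})\neq0$, so $\eta$ is biholomorphic near $\lambda_{0}$. Composing its local inverse with the $m$ determinations of $\varepsilon^{1/m}$ produces the Puiseux branches (\ref{eq:wel1c}) with leading coefficient $\alpha_{1}=\psi(\lambda_{0})^{-1}\neq0$.

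For (iv) $\Rightarrow$ (i), I would exploit Jacobi's formula $\partial_{\varepsilon}P(\lambda,\varepsilon)=-\mathrm{tr}\{\mathrm{adj}(\lambda\mathbb{I}-A(\varepsilon))\,A'(\varepsilon)\}$. Under (iv) the matrix $\lambda_{0}\mathbb{I}-A(0)$ has kernel and cokernel of dimension one and a single $m\times m$ nilpotent block in its invariant subspace, so a direct computation in the Jordan basis shows that its classical adjugate is a rank-one matrix of the form $c\,u_{0}v_{0}^{\ast}$ with a specific nonzero scalar $c$ depending on the Jordan normalization. Hence $\partial_{\varepsilon}P|_{(\lambda_{0},0)}=-c(v_{0},B_{1}u_{0})$, which is nonzero exactly under the genericity condition (\ref{eq:wel1e}).

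The genuinely substantive link is (iii) $\Rightarrow$ (iv), which is the content of Lidskii's theory and where I expect the main obstacle. The plan is: the $m$ Puiseux branches $\lambda_{h}(\varepsilon)$ produce $m$ distinct eigenvalues of $A(\varepsilon)$ for every small $\varepsilon\neq0$ (because the leading-order terms $\alpha_{1}\zeta^{h}\varepsilon^{1/m}$ are distinct when $\alpha_{1}\neq0$), and counting multiplicities this exhausts all eigenvalues of $A(\varepsilon)$ bifurcating from $\lambda_{0}$. Under the analytic continuation $\varepsilon\mapsto\mathrm{e}^{2\pi\mathrm{i}}\varepsilon$ these $m$ branches are permuted cyclically, forming a single Puiseux cycle of length $m$, which corresponds to a single $m\times m$ Jordan block of $A(0)$ at $\lambda_{0}$. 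Identifying $\alpha_{1}^{m}$ with a nonzero multiple of $(v_{0},B_{1}u_{0})$ via the Lidskii formula (obtained by substituting (\ref{eq:wel1c}) into the characteristic equation and reading off the coefficient of $\varepsilon$ in the Jordan basis) recovers (\ref{eq:wel1e}). The main difficulty lies in extracting Jordan-structural information from purely analytic data on the branches, which requires careful bookkeeping of left and right Jordan chains and the precise correspondence between monodromy cycles and Jordan block sizes; the other three implications are standard consequences of Weierstrass preparation and the Jacobi determinant identity.
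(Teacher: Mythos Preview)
The paper does not prove Proposition~\ref{prop:nonder}; it is stated as a result due to Welters with a citation to \cite{Welt}, and no argument is given. There is therefore no proof in the paper to compare your attempt against.

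Your outline is essentially correct on three of the four links. The implications (i)$\Rightarrow$(ii)$\Rightarrow$(iii) via Weierstrass preparation and local inversion are standard and your sketch is fine; your (iv)$\Rightarrow$(i) via Jacobi's formula and the rank-one adjugate is also correct. The weak point is (iii)$\Rightarrow$(iv): the assertion that ``a single Puiseux cycle of length $m$ corresponds to a single $m\times m$ Jordan block'' is precisely the kind of statement that needs the Lidskii machinery you are invoking, and it is not justified by the monodromy observation alone. A cleaner and self-contained reroute is (iii)$\Rightarrow$(i)$\Rightarrow$(iv): from (iii), expand $P(\lambda_{0},\varepsilon)=\bigl(\prod_{h=0}^{m-1}(\lambda_{0}-\lambda_{h}(\varepsilon))\bigr)\cdot(\text{unit})$ and read off that the $\varepsilon$-coefficient is a nonzero multiple of $\alpha_{1}^{m}$, giving (i). Then your own Jacobi computation yields (iv), because if the geometric multiplicity of $\lambda_{0}$ were at least $2$ the matrix $\lambda_{0}\mathbb{I}-A(0)$ would have rank at most $n-2$, its adjugate would vanish identically, and hence $\partial_{\varepsilon}P|_{(\lambda_{0},0)}=0$, contradicting (i); once the geometric multiplicity is forced to be $1$, the rank-one form $c\,u_{0}v_{0}^{\ast}$ of the adjugate (with $c\neq0$) that you already identified delivers $(v_{0},B_{1}u_{0})\neq0$. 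With this rerouting your cycle closes without appealing to the unproved monodromy--Jordan correspondence.
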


One of the ways to utilize Proposition \ref{prop:nonder} for detecting
a Jordan block in the Jordan form of a matrix $A$ is to establish
numerically that the perturbed matrix $A+B\left(\varepsilon\right)$
has eigenvalues consistent with the Puiseux series (\ref{eq:wel1c})
involving fractional power $\varepsilon^{\frac{1}{m}}$. That would
allow to claim that the the Jordan normal form of $A\left(0\right)$
corresponding to the eigenvalue $\lambda_{0}$ consists of a single
$m\times m$ Jordan block.

An alternative way to detect the presence of a Jordan block associated
with A is to utilize the statement (ii) of Proposition \ref{prop:nonder}.
Namely, one can try to establish numerically that there is a unique
solution $\varepsilon\left(\lambda\right)$ to characteristic equation
$\det\left\{ \lambda\mathbb{I}-A\left(\varepsilon\right)\right\} =0$
in the vicinity of $\varepsilon=0$ and $\lambda=\lambda_{0}$ which
is consistent with the following equations
\begin{equation}
\frac{\varepsilon\left(\lambda\right)}{\left(\lambda-\lambda_{0}\right)^{m}}=c_{1}+O\left(\left(\lambda-\lambda_{0}\right)\right),\!\quad c>0,\!\quad\lambda\rightarrow\lambda_{0},\label{eq:wel1f}
\end{equation}
where $m$ is a positive integer. If the relation (\ref{eq:wel1f})
for $\varepsilon\left(\lambda\right)$ is established it would imply
that the the Jordan normal form of $A\left(0\right)$ corresponding
to the eigenvalue $\lambda_{0}$ consists of a single $m\times m$
Jordan block.

There is yet another useful statement for verifying that the Jordan
form of matrix $A$ has a nontrivial Jordan block. To formulate it
we introduce the following operator norm on the set of matrices $\mathbb{C}^{n\times n}$:
\begin{equation}
\left\Vert A\right\Vert _{\infty}=\max_{1\leq j\leq n}\left\{ \sum_{j=1}^{n}\left|a_{ij}\right|\right\} ,\quad A=\left\{ a_{ij}\right\} _{i,j=1}^{n},\label{eq:Ainf1a}
\end{equation}
where $i$ and $j$ are respectively the row the column indexes. Here
is the statement due Bauer and Fike \cite[6.3]{HorJohn}.
\begin{prop}[perturbation of a diagonalizable matrix]
\label{prop:perdiag} Let $A$ be $\mathbb{C}^{n\times n}$ matrix
which is diagonalizable, that is $A=S\varLambda S^{-1},$ in which
$S$ is nonsingular and $\varLambda$ is diagonal. Let $B$ be $\mathbb{C}^{n\times n}$matrix.
If $\tilde{\lambda}$ is an eigenvalue of $A+B$, there is an eigenvalue
$\lambda$ of $A$ such that
\begin{equation}
\left|\tilde{\lambda}-\lambda\right|\leq\kappa_{\infty}\left(S\right)\left\Vert B\right\Vert _{\infty},\quad\kappa_{\infty}\left(S\right)=\left\Vert S\right\Vert _{\infty}\left\Vert S^{-1}\right\Vert _{\infty},\label{eq:Ainf1b}
\end{equation}
in which $\kappa_{\infty}\left(S\right)$ is the co-called condition
number with respect to the norm $\left\Vert \ast\right\Vert _{\infty}$.
\end{prop}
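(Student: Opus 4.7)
The plan is to carry out the classical Bauer–Fike argument, with the only subtleties being the choice of operator norm and the reduction to the diagonal case using $S$. First I would dispose of the trivial case in which $\tilde{\lambda}$ happens to already be an eigenvalue of $A$: then one may take $\lambda = \tilde{\lambda}$ and the inequality (\ref{eq:Ainf1b}) holds with equality zero. So from here on assume $\tilde{\lambda}\notin\mathrm{spec}(A)$, which makes the diagonal matrix $\varLambda-\tilde{\lambda}\mathbb{I}$ invertible with no zero entries on the diagonal.

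The main computation is a standard rearrangement. Since $\tilde{\lambda}$ is an eigenvalue of $A+B$, I would pick a nonzero eigenvector $x\in\mathbb{C}^n$ with $(A+B)x=\tilde{\lambda}x$, rewrite this as $(A-\tilde{\lambda}\mathbb{I})x=-Bx$, substitute $A=S\varLambda S^{-1}$, and solve for $S^{-1}x$. Setting $y=S^{-1}x\neq 0$ gives
$$y=-\left(\varLambda-\tilde{\lambda}\mathbb{I}\right)^{-1}S^{-1}B\,S\,y.$$
Applying the operator norm $\|\cdot\|_\infty$ from (\ref{eq:Ainf1a}) to both sides, using its submultiplicativity, and dividing by $\|y\|_\infty>0$ yields
$$1\leq \left\|\left(\varLambda-\tilde{\lambda}\mathbb{I}\right)^{-1}\right\|_\infty\,\left\|S^{-1}\right\|_\infty\,\|B\|_\infty\,\|S\|_\infty=\kappa_\infty(S)\,\|B\|_\infty\,\left\|\left(\varLambda-\tilde{\lambda}\mathbb{I}\right)^{-1}\right\|_\infty.$$

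The last step uses the elementary fact that, for any diagonal matrix $D=\mathrm{diag}(d_1,\ldots,d_n)$, the $\infty$-operator norm defined by (\ref{eq:Ainf1a}) equals $\max_i|d_i|$, so that
$$\left\|\left(\varLambda-\tilde{\lambda}\mathbb{I}\right)^{-1}\right\|_\infty=\max_{1\leq i\leq n}\frac{1}{|\lambda_i-\tilde{\lambda}|}=\frac{1}{\min_{1\leq i\leq n}|\lambda_i-\tilde{\lambda}|},$$
where $\lambda_1,\ldots,\lambda_n$ are the diagonal entries of $\varLambda$, i.e.\ the eigenvalues of $A$. Rearranging and choosing $\lambda$ to be an eigenvalue of $A$ attaining the minimum distance to $\tilde{\lambda}$ gives $|\tilde{\lambda}-\lambda|\leq\kappa_\infty(S)\|B\|_\infty$, which is the desired bound (\ref{eq:Ainf1b}).

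There is no real obstacle here: the argument is a one–page application of submultiplicativity once one notices that conjugating by $S$ reduces the perturbation problem to one involving only the diagonal $\varLambda$. The only item that requires a brief verification is that the norm $\|\cdot\|_\infty$ defined in (\ref{eq:Ainf1a}) is indeed an induced operator norm (hence submultiplicative) and coincides with the max-absolute-entry norm on diagonal matrices; both are textbook and I would simply cite them from \cite{HorJohn}.
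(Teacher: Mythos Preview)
Your argument is the standard Bauer--Fike proof and is correct. Note, however, that the paper does not actually prove this proposition: it is stated without proof and attributed to Bauer and Fike with a citation to \cite[6.3]{HorJohn}. So there is no ``paper's own proof'' to compare against; your write-up simply supplies the classical argument that the cited reference contains.
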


A way to utilize Proposition \ref{prop:perdiag} is to consider perturbed
matrix $A+B\left(\varepsilon\right)$ and establish numerically that
$\left|\tilde{\lambda}-\lambda\right|>c\varepsilon^{\alpha}$ where
$c>0$ and $0<\alpha<1$. Then since the indicated inequality involves
fractional power $\varepsilon^{\alpha}$ with $0<\alpha<1$ it would
be inconsistent with inequality (\ref{eq:Ainf1b}) where $\left\Vert B\left(\varepsilon\right)\right\Vert _{\infty}=\left|\varepsilon\right|\left\Vert B\right\Vert _{\infty}$implying
that unperturbed matrix $A$ can not be diagonalized, and consequently
that the Jordan form of $A$ has to have a nontrivial Jordan block.

\section{Appendix}

\subsection{Appendix B: Some properties of block matrices\label{sec:block-mat}}

The statements on block matrices below are useful for our studies
\cite[2.8]{BernM}.
\begin{prop}[factorization of a block matrix]
\label{prop:4blockA1} Let $A\in\mathbb{C}^{n\times n}$, $B\in\mathbb{C}^{n\times m}$,
$C\in\mathbb{C}^{p\times n}$, $D\in\mathbb{C}^{p\times m}$, and
assume $A$ is nonsingular. Then
\begin{gather}
\left[\begin{array}{cr}
A & B\\
C & D
\end{array}\right]=\left[\begin{array}{rc}
I & 0\\
CA^{-1} & I
\end{array}\right]\left[\begin{array}{cr}
A & 0\\
0 & D-CA^{-1}B
\end{array}\right]\left[\begin{array}{rr}
I & A^{-1}B\\
0 & I
\end{array}\right],\label{eq:BLock1d}
\end{gather}
and
\begin{gather}
\mathrm{Rank}\,\left\{ \left[\begin{array}{cr}
A & B\\
C & D
\end{array}\right]\right\} =n+\mathrm{Rank}\,\left\{ D-CA^{-1}B\right\} .\label{eq:BLock1e}
\end{gather}
If furthermore, $m=p$, that is $A\in\mathbb{C}^{n\times n}$, $B\in\mathbb{C}^{n\times m}$,
$C\in\mathbb{C}^{m\times n}$, $D\in\mathbb{C}^{m\times m}$, then\cite[26]{FukSha}
\begin{gather}
\det\left\{ \left[\begin{array}{cr}
A & B\\
C & D
\end{array}\right]\right\} =\det\left\{ A\right\} \det\left\{ D-CA^{-1}B\right\} .\label{eq:Block1f}
\end{gather}
\end{prop}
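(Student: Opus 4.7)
The plan is to verify the factorization \eqref{eq:BLock1d} first by a direct block multiplication, and then to extract both the rank formula \eqref{eq:BLock1e} and the determinant formula \eqref{eq:Block1f} as formal consequences of it, using that the outer triangular block factors are unipotent and hence invertible with determinant $1$.

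For the factorization \eqref{eq:BLock1d}, I would simply multiply the three block matrices on the right-hand side, right-to-left. The product of the middle block-diagonal matrix with the upper unit-triangular matrix gives
\[
\left[\begin{array}{cc} A & 0 \\ 0 & D-CA^{-1}B \end{array}\right]\left[\begin{array}{cc} I & A^{-1}B \\ 0 & I \end{array}\right]=\left[\begin{array}{cc} A & B \\ 0 & D-CA^{-1}B \end{array}\right].
\]
Left-multiplying by the lower unit-triangular matrix then reproduces the second block-row $C,\,CA^{-1}B+(D-CA^{-1}B)=C,D$, recovering the original block matrix. This is just a bookkeeping check; the only point worth stating carefully is that the conformability of $A,B,C,D$ ensures all block multiplications are well-defined, and that nonsingularity of $A$ is what makes the entry $A^{-1}B$ (and thus the Schur complement $D-CA^{-1}B$) available.

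For the rank identity \eqref{eq:BLock1e}, I would use that the two outer factors in \eqref{eq:BLock1d} are unit lower- and upper-triangular block matrices and hence invertible (their inverses are obtained by negating the off-diagonal block). Multiplication by invertible matrices preserves rank, so the rank of the block matrix equals the rank of the middle block-diagonal matrix, which by the standard rank formula for block-diagonal matrices equals $\mathrm{rank}\,A+\mathrm{rank}\,(D-CA^{-1}B)=n+\mathrm{rank}\,(D-CA^{-1}B)$, using $\mathrm{rank}\,A=n$ from the nonsingularity hypothesis.

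For the determinant identity \eqref{eq:Block1f}, I would assume $m=p$ so that every block is square and then apply multiplicativity of the determinant to \eqref{eq:BLock1d}. Both outer triangular block factors have identity blocks on their diagonals and hence determinant $1$ (this is standard; if desired one can expand along the block structure or use a cofactor argument), while the middle block-diagonal matrix with square diagonal blocks has determinant $\det A\cdot\det(D-CA^{-1}B)$. This yields \eqref{eq:Block1f} directly. There is no real obstacle in this proof beyond carefully stating the two auxiliary facts (triangular block matrices with identity diagonal blocks have determinant $1$, and block-diagonal matrices have determinant equal to the product of the diagonal-block determinants); the mild pitfall is simply to keep track of block dimensions so that the Schur complement $D-CA^{-1}B$ lives in the right matrix space.
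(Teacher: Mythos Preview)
Your proposal is correct and follows the standard textbook argument. Note, however, that the paper does not supply its own proof of this proposition: it is stated in the appendix as a quoted result with a citation to Bernstein's \emph{Matrix Mathematics} (and, for the determinant identity, to Fuks--Shabat), so there is no in-paper proof to compare against; your direct verification of the factorization followed by the rank and determinant consequences is exactly the expected derivation.
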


\subsection{Joukowski transform\label{subsec:jouk}}

Joukowski (Zhukovsky) transform, \cite[26]{FukSha}, \cite[8.4]{Simon C2A},
is defined by
\begin{equation}
u=z+\frac{1}{z}.\label{eq:jouk1a}
\end{equation}
The inverse to it is of the degree 2, that is there are two solutions
$z$ to the equation (\ref{eq:jouk1a}) for $z$, namely
\begin{equation}
z_{1}=\frac{u-\sqrt{u^{2}-4}}{2},\quad z_{2}=\frac{u+\sqrt{u^{2}-4}}{2}=\frac{1}{z_{1}}.\label{eq:jouk1b}
\end{equation}
The branch of $\sqrt{u^{2}-4}$ in equations (\ref{eq:jouk1b}) is
defined so that $\sqrt{u^{2}-4}=u\sqrt{1-\frac{4}{u^{2}}}$ where
the continuous branch of the square root $\sqrt{1-\frac{4}{u^{2}}}$
outside interval $\left[-2,2\right]$ is chosen by the condition that
for $u=\infty$ we have $\sqrt{1}=1$. For detailed description of
the Riemann surface associated with function $u=z+\frac{1}{z}$ and
its inverse function branches in equations (\ref{eq:jouk1b}) see
\cite[26]{FukSha}.

The Joukowski transform $u=z+\frac{1}{z}$ provides for one-to-one
correspondence between the unit disk $\mathbb{D}=\left\{ z:\left|z\right|<1\right\} $
and set $\mathbb{\hat{C}}\setminus$$\left[-2,2\right]$ where $\mathbb{\hat{C}}=\mathbb{C}\cup\left\{ \infty\right\} $,
that is the set of complex numbers $\mathbb{C}$ extended by adding
to it $\infty$. The first equation in (\ref{eq:jouk1b}) is the inverse
to this one-to-one correspondence. The Joukowski transform $u=z+\frac{1}{z}$
is also one-to-one mapping between (i) the exterior $\left\{ z:\left|z\right|>1\right\} $
of the unit disk and (ii) set $\mathbb{\hat{C}}\setminus$$\left[-2,2\right]$.
The second equation in (\ref{eq:jouk1b}) is the inverse to it. Fig.
\ref{fig:jouk1} shows the polar orthogonal grid of the unit circle
and corresponding under Joukowski transform $u=z+\frac{1}{z}$ the
orthogonal grid on the set $\mathbb{\hat{C}}\setminus$$\left[-2,2\right]$.
\begin{figure}[h]
\begin{centering}
\includegraphics[scale=0.35]{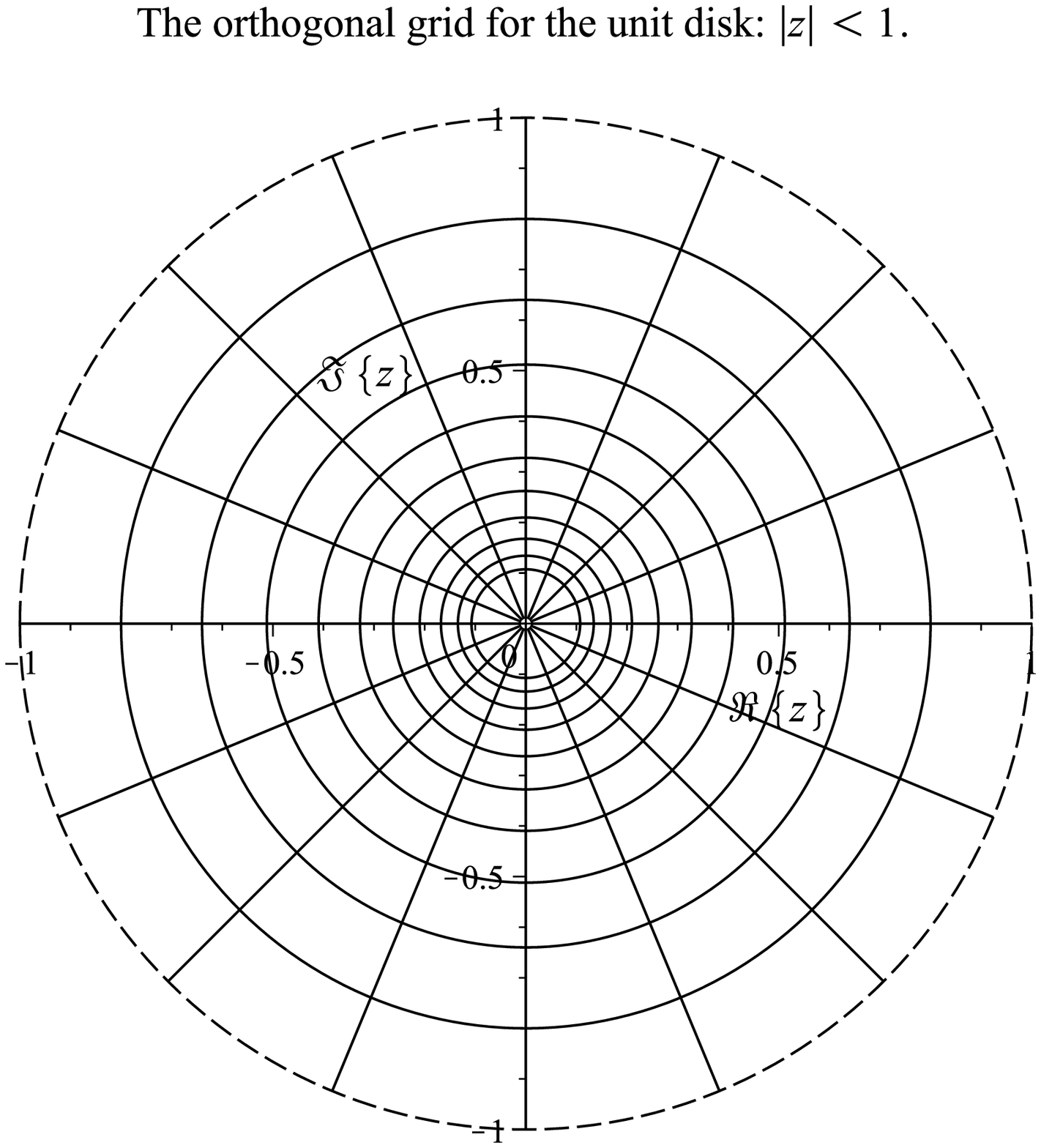}\hspace{0.2cm}\includegraphics[scale=0.43]{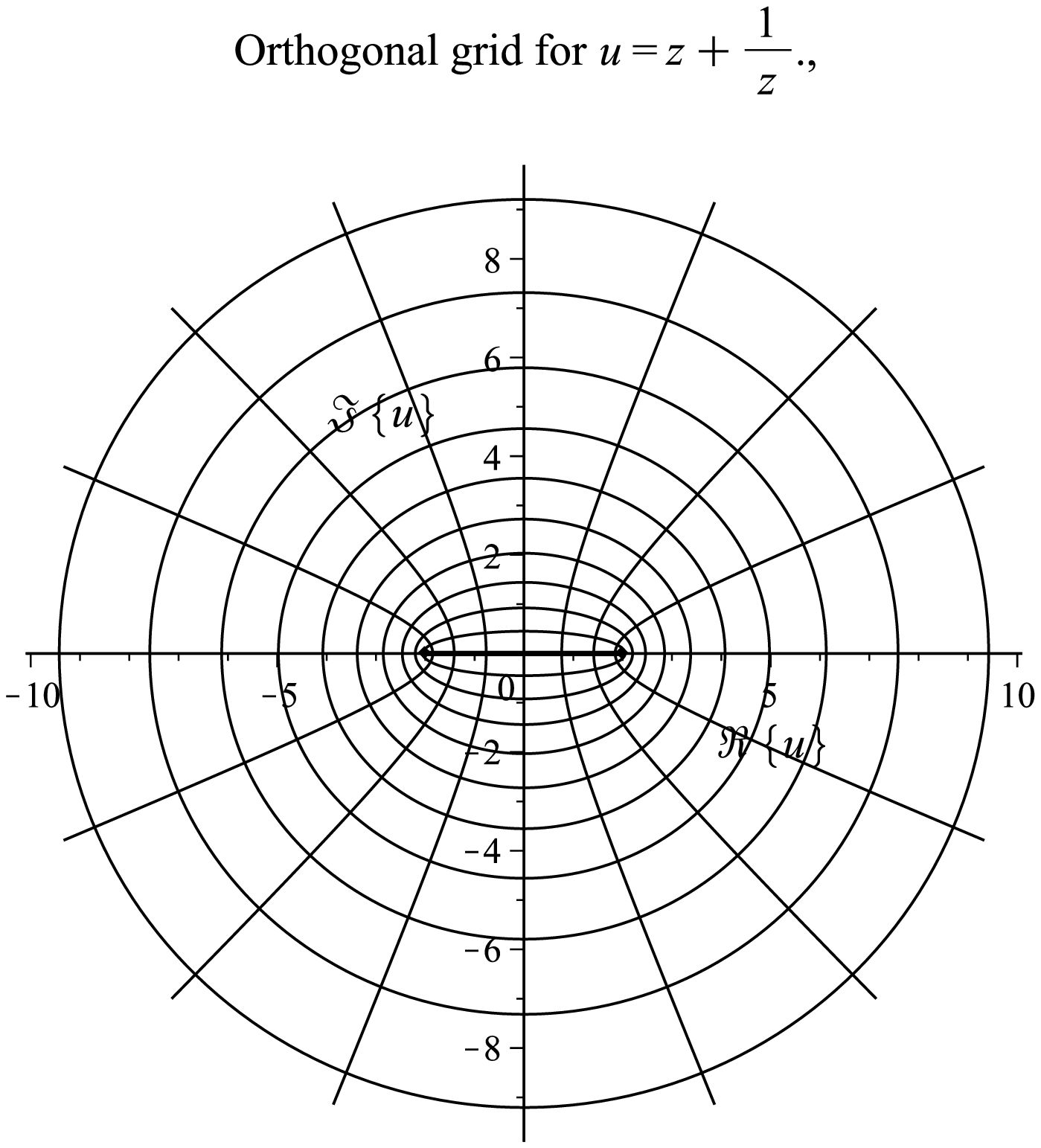}
\par\end{centering}
\centering{}(a) \hspace{7cm}(b)\caption{\label{fig:jouk1} Plot (a) shows the polar-coordinate grid for the
unit disk $\mathbb{D}=\left\{ z:\left|z\right|<1\right\} $. Plot
(b) shows the corresponding under Joukowski transform $u=z+\frac{1}{z}$
the orthogonal grid on the set $\mathbb{\hat{C}}\setminus$$\left[-2,2\right]$.
In particular, counterclockwise oriented circles are mapped onto clockwise
oriented ellipses as described by relations (\ref{eq:jouk1c}).}
\end{figure}
Notice also the image of the circle $\left|z\right|=r<1$ oriented
counterclockwise under Joukowski transform is an ellipse oriented
clockwise, namely \cite[26]{FukSha},
\begin{equation}
z=r\exp\left\{ \mathrm{i}\theta\right\} \rightarrow u=\frac{1}{2}\left(r+\frac{1}{r}\right)\cos\theta+\mathrm{i}\frac{1}{2}\left(r-\frac{1}{r}\right)\sin\theta.\label{eq:jouk1c}
\end{equation}
The circle $\left|z\right|=1$ is mapped onto the doubled segment
$\left[-2,2\right]$ with points $-1$ and $1$ mapped respectively
on $-2$ and $2$. We can think of the upper semi-circle as mapped
on the the lower edge of the cut $\left[-2,2\right]$, and we think
of the lower semi-circle as mapped on the the upper edge of this cut. 

\textbf{Acknowledgment:} This research was supported by AFOSR grant
\# FA9550-19-1-0103 and Northrop Grumman grant \# 2326345.

\end{document}